\newcommand{\react}[1]{\ensuremath{\xrightharpoonup{\hbox{\scriptsize $#1$}}}}
\newcommand{\revreact}[2]{\ensuremath{ \xrightleftharpoons[\hbox{\scriptsize$#2$}]{\hbox{\scriptsize$#1$}}}}
\newcommand{\numreac}{m} 
\newcommand{\numconc}{n} 
\newtheorem{assumption}{Assumption}
\newtheorem{lemma}{Lemma}
\newtheorem{proposition}{Proposition}
\newtheorem{theorem}{Theorem}
\newtheorem{definition}{Definition}
\newtheorem{corollary}{Corollary}
\newtheorem{remark}{Remark}
\newtheorem{example}{Example}
\newtheorem{procedure}{Procedure}
\title{\LARGE \bf Structural polyhedral stability of a biochemical network\\
is equivalent to finiteness of the associated generalised Petri net
}
\author{Franco Blanchini$^a$, Carlos Andr\'es Devia$^b$ and Giulia Giordano$^c$
\thanks{The work of C.A.D. was supported by the Delft Technology Fellowship grant awarded to G.G., who also acknowledges the support of the Strategic Grant MOSES at the University of Trento.}
\thanks{$^a$ Dipartimento di Scienze Matematiche, Informatiche e Fisiche, Universit\`a degli Studi di Udine, Via delle Scienze 206, 33100 Udine, Italy. {\tt \small blanchini@uniud.it}}
\thanks{$^b$ Delft Center for Systems and Control, Delft University of Technology, Mekelweg 2, 2628 CD Delft, The Netherlands. {\tt\small c.a.deviapinzon@tudelft.nl}}
\thanks{$^c$ Department of Industrial Engineering, University of Trento, Via Sommarive 9, 38123 Trento, Italy. {\tt\small giulia.giordano@unitn.it}}
}
\begin{document}
\maketitle

\begin{abstract}
We consider biochemical systems associated with a generalised class of Petri nets with possibly negative token numbers. We show that the existence of a structural polyhedral Lyapunov function for the biochemical system is equivalent to the boundedness of the associated Petri net evolution or, equivalently, to the finiteness of the number of states reachable from each initial condition.
For networks that do not admit a polyhedral Lyapunov function,
we investigate whether it is possible to enforce polyhedral structural stability 
by applying a strong negative feedback
on some \emph{pinned} nodes: in terms of the Petri net, this is equivalent to turning pinned nodes into \emph{black holes} that clear any positive or negative incoming token. If such nodes are chosen so that the transformed Petri net has bounded discrete trajectories, then there exists a stabilising pinning control: the biochemical network becomes Lyapunov stable if a sufficiently strong local negative feedback is applied to the pinned nodes. These results allow us to structurally identify the critical nodes to be locally controlled so as to ensure the stability of the whole network.
\end{abstract}

\section{Introduction and Motivation}

 Structural analysis investigates how several systems often encountered in nature
 enjoy important properties in view of their interconnection structure, regardless 
of parameter values.
Here, we consider structural stability and \emph{stabilisation} 
of biochemical networks \cite{BG2021survey,Clarke80,Feinberg2019}, adopting piecewise-linear Lyapunov functions
\cite{BlanchiniGiordano2014,BlanchiniGiordanoCDC2015,BlanchiniGiordano2017}, which -- along with the complementary piecewise-linear-in-rate Lyapunov functions
\cite{AlrAng13,AlrAng16,AlrAngSon19,BlanchiniGiordanoCDC2015} that can be seen as their dual \cite{BG2021} -- have proven effective in the 
stability analysis of chemical reaction networks. A recent contribution \cite{AlrAngSon20} shows that this type of functions can be very useful to detect, more in general, non-oscillatory behaviours.

Chemical reaction networks have been often analysed resorting to discrete-event frameworks, employing for instance Petri nets \cite{AngDeLSon2007, Soliman2012}: a chemical reaction is seen as a process that assembles the needed number of reactant molecules and releases the proper number of product molecules.

In this paper, we show how the existence of a piecewise-linear Lyapunov function for the large class of \emph{unitary} (bio)chemical networks can be interpreted as the boundedness of the evolution of a suitable \emph{generalised} Petri net, which can have both positive and negative token numbers.

This generalisation of Petri nets has been widely investigated in the literature, e.g. under the name of \emph{lending Petri net} \cite{Bartoletti201575}, and \emph{negative tokens} \cite{MuYam90} have been also called \emph{anti-tokens} \cite{Gerogiannis1998133,Sokolov2007197} or \emph{debit tokens} \cite{Bartoletti20151,KohlerBussmeier2008329}.
However, to the best of the authors' knowledge, this is the first time that such a concept is associated with the stability of biochemical networks.

We also consider a \emph{structural} pinning control problem. Pinning some nodes means applying a strong feedback to these nodes with the goal of controlling the whole network. 

Pinning control has been extensively investigated in past years \cite{Li2004,Wang2014103}, with one of the main driving questions being how many nodes to pin and which ones \cite{Orouskhani2016}.
The approach has been used to address network control problems ranging from 
asymptotic convergence  \cite{DeLellis20133033} and noise rejection \cite{BRB2019}
to consensus  \cite{Chen20091215} and synchronisation \cite{Porfiri20083100}. Pinning techniques have been applied in different areas such as circuits \cite{DeLellis20133033}, power grids \cite{Orouskhani2016}, 
protein networks and gene regulatory networks \cite{BRB2019}.
Pinning control of a (bio)chemical reaction network system can be seen as equivalent to the conversion of the pinned nodes into 
\emph{black holes}, which swallow any incoming token (either positive or negative),
in the associated generalised Petri net. 

The main contributions of this paper are summarised next.

\begin{itemize}
\item We formulate the structural stability problem for biochemical reaction networks (Section \ref{setup}) and we associate a biochemical network with a Generalised Petri Net (GPN), with possibly negative tokens (Section \ref{polybounded}).
\item The GPN is fully determined by the network structure, and it does not depend on the (monotonic) reaction rate functions.
\item  The boundedness of all possible evolutions of the GPN is equivalent to the existence of a polyhedral Lyapunov function (PLF) for the biochemical system; such a PLF can be computed based on the efficient numerical procedure proposed in \cite{BlanchiniGiordano2014} (Section \ref{subsec:petri}).
\item The equivalence with the GPN suggests more efficient stopping criteria for the numerical procedure (Section \ref{subsec:stop}).
\item For networks that do not admit a PLF, we study how to convert some nodes into \emph{black holes} that swallow any incoming token (Section \ref{subsec:blackholes}), so as to ensure the stability of the network.
\item We show that converting some nodes into black holes is equivalent to \emph{pinning} them, i.e. virtually fixing their state variables to imposed values (Section \ref{sec:pinningcontrol}); after pinning appropriately chosen nodes, the network can admit a PLF.
\item We illustrate our results by assessing the structural stability, possibly after pinning suitably chosen nodes, of some examples from the biochemical literature, including transcription and translation models (Section \ref{sec:applications}).
\end{itemize}

\section{Problem Formulation}\label{setup}
Consider the general class of biochemical systems
\begin{equation}\label{sys}
\dot x(t) = S g(x(t)) + g_0,
\end{equation}
where the state vector $x(t) \in \mathbb{R}^n_+$ includes the concentrations of the involved biochemical species, $S \in \mathbb{Z}^{n \times m}$ is the stoichiometric matrix, the vector function $g: \mathbb{R}^n_+ \to \mathbb{R}^m$ represents reaction rates, and $g_0 \in \mathbb{R}^n_+$ is a vector of constant influxes. We make the following standing assumptions.
\begin{assumption}\label{as_eq}
System \eqref{sys} admits the equilibrium $\bar x \in \mathbb{R}^n_+$, such that
\begin{equation}
\label{eq}
0 = S g(\bar x) + g_0.
\end{equation}
\end{assumption}
\begin{assumption}\label{unit}
The network is \emph{unitary} \cite{BlanchiniGiordano2014}, namely, each of the entries of matrix $S$ is either $1$, $0$ or $-1$.
\end{assumption}

\begin{assumption}\label{incr}
Function $g_k(x)$ is nonnegative and strictly monotonic (either increasing or decreasing) in each of its arguments; it depends on variable $x_i$ if and only if $S_{ik}=-1$,
and it is zero if and only if one of its arguments is zero.
\end{assumption}

\begin{remark}
Unitary networks, as in Assumption \ref{unit}, cannot include multi-molecular reactions, such as $2 X_1+X_2 \react{} X_3$. However, multi-molecular reactions are known to occur in fact as chains of bi-molecular reactions (e.g., $X_1+X_2\react{} X_4$ and $X_1+X_4\react{} X_3$),
which do lead to a unitary network.
Note also that Assumption \ref{incr} rules out autocatalytic reactions (e.g. of the form $X_1 \react{} 2X_1$), for which structural stability could never be guaranteed.
\end{remark}

As shown in \cite{BlanchiniGiordano2014,BlanchiniGiordano2017,BG2021survey},
the variable shift $z(t)=x(t)-\bar x$ allows us to write the system in the equivalent form
\begin{equation}\label{bdc}
\dot z(t) = BD(z(t))C z(t),
\end{equation}
where $D$ is a diagonal matrix with positive diagonal entries,
$$
D(z) =\mbox{diag}\{D_1(z), D_2(z),   \dots ,D_q(z)\}, ~~~~D_k(z) > 0.
$$
Matrices $B$, $D$ and $C$ are derived from \eqref{sys} as follows
\begin{itemize}
\item The diagonal entries of $D$ are related to the absolute values of the $q$ nonzero partial derivatives $\partial g_j/\partial x_i$, with $j \in \{ 1, \dots, m\}$, $i \in \{ 1, \dots, n \}$, arbitrarily ordered.
\item The $k$th column of $B$, $B_k$, corresponding to $D_{kk}$ related to $|\partial g_j/\partial x_i|$, is equal to the column $S_j$ of $S$.
\item The $k$th row of $C$, $C_k^\top$, corresponding to $D_{kk}$ related to $|\partial g_j/\partial x_i|$, has a single nonzero entry, the $i$th, equal to the sign of $\partial g_j/\partial x_i$.
\end{itemize}
The proof \cite{BlanchiniGiordano2017,BG2021survey} relies on the fact that the Jacobian of $Sg(x)$ can be written as $J(x) = B \Delta C$, where the diagonal matrix $\Delta$ includes the absolute value of all partial derivatives, and on the integral formula

{\footnotesize
\begin{eqnarray*}
&&Sg(x)-Sg(\bar x) = \left [\int_0^1~J(\bar x + \sigma(x-\bar x)) d \sigma \right ](x-\bar x)\\
&=&  B \left [\int_0^1~\Delta(\bar x + \sigma(x-\bar x)) d \sigma \right ] C(x-\bar x) =
B D(x)C(x-\bar x).
\end{eqnarray*}}
\emph{Structural} stability (which needs to hold regardless of the numerical values and functional expressions within matrix $D$, \cite{BlanchiniGiordano2014,BlanchiniGiordano2017,BG2021survey}) can be studied by absorbing system \eqref{bdc} in a differential inclusion
 \begin{equation}\label{diffinc}
\dot z(t) = BD(t)C z(t),
\end{equation}
where $D(t)$ is a positive definite diagonal matrix of size $q$. 

\begin{figure}[tb]
\centering
\subfloat{\includegraphics[width=4cm]{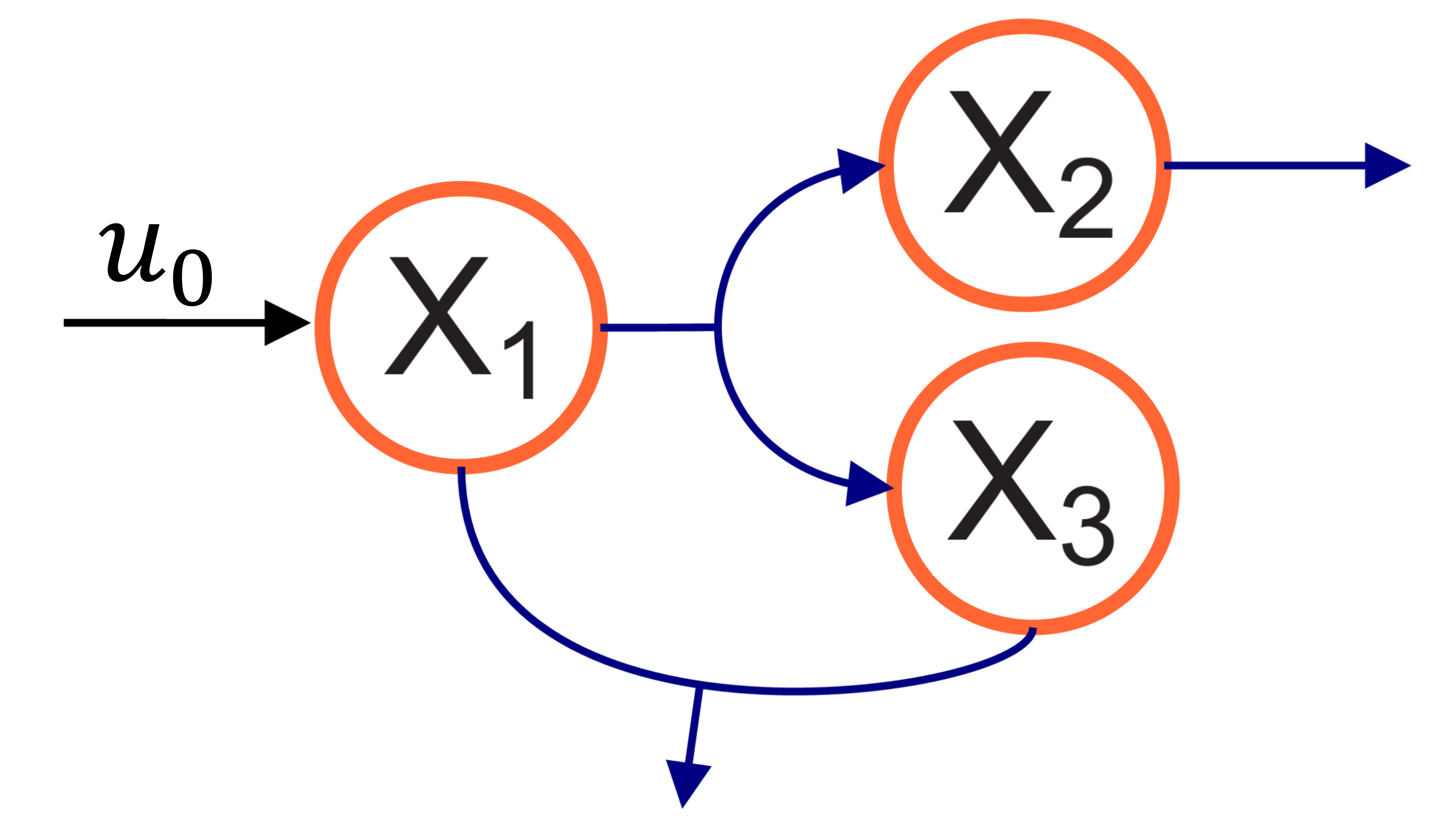}} \,
\subfloat{\includegraphics[width=4cm]{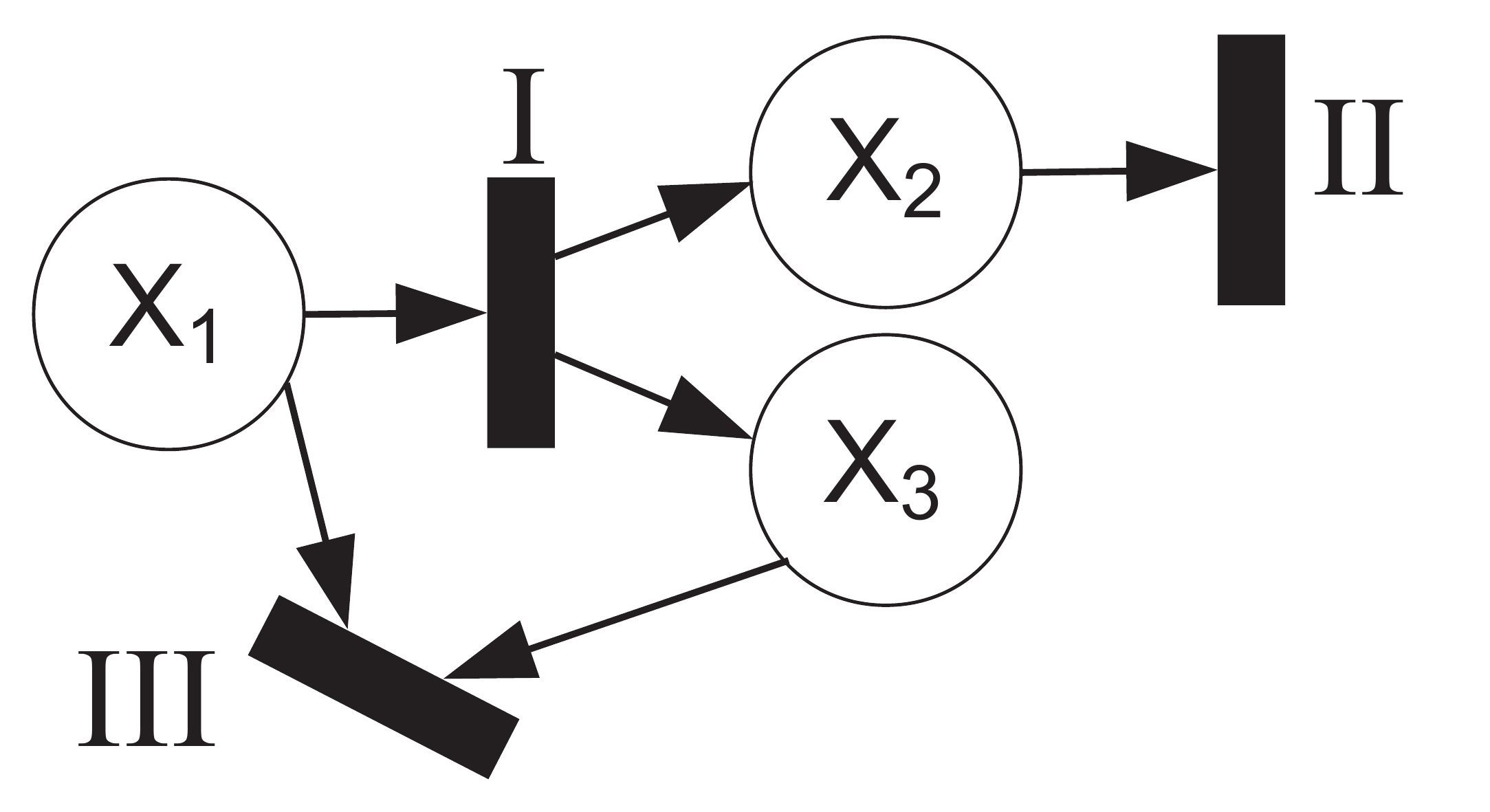}}
\caption[Graph and discrete-event representation.]{Graph and GPN representation of the system in Example \ref{ex:Telemann}.}
\label{fig:Petri}
\end{figure}

\begin{example}\label{ex:Telemann}
Consider the biochemical reaction network $\emptyset \react{u_1} X_1$, $X_1 \react{g_1} X_2 + X_3$, $X_1 + X_3 \react{g_{13}} \emptyset$, $X_2 \react{g_2} \emptyset$, admitting the graph representation in Fig.~\ref{fig:Petri}, left. The system evolution is described by the differential equations
\begin{equation*}
\begin{cases}
\dot x_1 = u_1 -g_1(x_1)-g_{13}(x_1,x_3)\\
\dot x_2 = g_1(x_1)-g_2(x_2)\\
\dot x_3 = g_1(x_1)-g_{13}(x_1,x_3)
\end{cases}
\end{equation*}
which can be recast in the form \eqref{bdc} with
\begin{equation*}
B = \left[ \begin{smallmatrix}
-1 & ~~0 & -1 & -1\\
~~1 & -1 & ~~0 & ~~0\\
~~1 & ~~0 & -1 & -1
\end{smallmatrix} \right] \quad \mbox{and} \quad
C = \left[ \begin{smallmatrix}
1 & 0 & 0\\
0 & 1 & 0\\
0 & 0 & 1\\
1 & 0 & 0
\end{smallmatrix}\right].
\end{equation*}
\end{example}

\begin{remark}
For unitary networks, satisfying Assumption \ref{unit}, $C_i^\top B_i=-1$ for all $i=1,\dots,q$.
\end{remark}

Henceforth we work under an additional, mild assumption.
\begin{assumption} \label{largebound}
There exist {\bf unknown} bounds $D^-_k>0$ and $D^+_k>0$ such that
\begin{equation}
D^-_k \leq D_k \leq D^+_k, \quad k=1,\dots,q. \label{largeboundeq}
\end{equation}
\end{assumption}

\begin{definition} \label{epsi}
The system \eqref{bdc} under Assumptions \ref{as_eq}-\ref{largebound}
is \emph{structurally stable} if the differential inclusion \eqref{diffinc} is Lyapunov stable (possibly marginally), while it is \emph{structurally asymptotically stable} if \eqref{diffinc}
is asymptotically stable.
\end{definition}

\begin{remark}
The purely technical Assumption \ref{largebound} does not change the structural nature
of our investigation. The upper bound $D^+_k$  ensures compactness and can be arbitrarily large.
 The arbitrarily small $D^-_k$ ensures $D_k$ to be bounded away from zero
(cf. the $\epsilon$-perturbation in \cite[Definition 2]{AlrAngSon20}).
For instance, the scalar system $\dot z(t) =-D(t) z(t)$ is not necessarily asymptotically stable with the weaker bound $D>0$ (e.g., if $D(t)=e^{-t}$, $\dot z =-D(t) z(t)$ does not converge to $0$), while stability is asymptotic if $D(t)\geq D^->0$.
\end{remark}

\section{Polyhedral Lyapunov Functions and Generalised Petri Net Boundedness} 
\label{polybounded}
\begin{definition}\label{def:Lyap}
Given an uncertain dynamical system $$\dot x=f(x,w),~~~w \in {\cal W},$$
where ${\cal W}$ is a closed set, the positively homogeneous convex function
$V(x)$ is a \emph{Lyapunov function (LF)} for the system $\dot x=f(x,w)$
if, for some $\beta\geq0$, the (generalised) Lyapunov derivative 
$$ D^+V(x,w)=\limsup_{h\rightarrow 0}\frac{V(x+hf(x,w))-V(x))}{h} \leq - \beta V(x) $$ 
for all $x$ and $w \in {\cal W}$.
The LF is \emph{weak} if the inequality holds for $\beta=0$, \emph{strong} if $\beta>0$.
The LF $V(x)$ is  \emph{polyhedral (PLF)} if it can be written as
\begin{equation}\label{polyfuncF}
V(x) = \|Fx\|_\infty,
\end{equation}
where matrix $F$ has full column rank, or
\begin{equation}\label{polyfuncX}
V(x) = \min \{\|p\|_1:~x=Xp\}, 
\end{equation}
where matrix $X$ has full row rank.
\emph{Polyhedral stability} means that the system admits a PLF.
\end{definition}

The procedure in \cite{BlanchiniGiordano2014} to generate a PLF for the system
associates 
the original differential inclusion \eqref{diffinc}
with a discrete difference inclusion 
\begin{equation}
y(k+1) = \Phi(k) y(k),~~\Phi(k) \in \mathcal{F},\label{discdiffinc_sw}
\end{equation} 
where
\begin{equation}\label{family_M}
 \mathcal{F} = \{ \Phi_i : \Phi_i = I +  B_i C_i^\top, \,\, i=1,\dots,q\}.
\end{equation} 
The procedure iterates over polyhedral sets,
starting from the unit ball of the $1$-norm: $\mathcal{Y}^0 = \mbox{conv}\{[-I~I]\}$, where $\mbox{conv}$ denotes the convex hull. 
\begin{procedure} 
\label{proc}~\cite{BlanchiniGiordano2014}
\begin{enumerate}
\item $Y^0=I$, $\mathcal{Y}^0 = \mbox{conv}\{[-Y^0~Y^0]\}$;
\item $Y^{k+1} := [\Phi_1~\Phi_2~\dots~\Phi_q]Y^{k}$; 
\item     $\mathcal{Y}^{k+1}:=\mbox{conv}\{[ -Y^{k+1}~ Y^{k+1}]\}$,
\item \label{incomplete} IF $\mathcal{Y}^{k+1} = \mathcal{Y}^{k}$, set $\bar{\mathcal{Y}} = \mathcal{Y}^{k}$ and STOP;
ELSE go to step $2$.
\end{enumerate}
\end{procedure}

For a practical implementation, further stopping conditions should be added before the ELSE statement at Step \ref{incomplete} of the procedure; otherwise, as currently stated, the procedure {\em might fail to stop}. 
Convergence issues are one of the aspects we will investigate: stopping criteria with a negative outcome (i.e., no structural PLF exists) will be discussed in Section \ref{subsec:stop} for numerical purposes.

If Procedure \ref{proc} stops, the polytope $\bar {\mathcal{Y}}=\mathcal{Y}^{k}$,
with vertices $X=[ -Y^{k}~ Y^{k}]$,
is the unit ball of a PLF as in \eqref{polyfuncX}. If we apply the same procedure to the dual system $\dot z=C^\top D B^\top z$,
by considering $\Phi_i^\top$, under convergence assumptions,
we obtain the PLF in the dual form in \eqref{polyfuncF} with $F=Y^\top$, where $Y=[ -Y^{k}~ Y^{k}]$.
The efficient implementation of the procedure requires removing the redundant 
columns at each iteration $k$; see \cite{BlanchiniGiordano2014} for details.
The sequence $\mathcal{Y}^{k}$ being formed by integer vectors drastically improves computability and provides efficient stopping criteria when the procedure fails to converge.

As proven in \cite{BlanchiniGiordano2014}, the stability of the differential inclusion \eqref{diffinc} is equivalent to the stability  of \eqref{discdiffinc_sw};
if \eqref{diffinc} admits a (weak) PLF, then it is marginally stable, which implies that system \eqref{bdc}, with $D(z)$ continuous, is stable,
and is asymptotically stable if and only if its Jacobian $BDC$ is structurally non-singular \cite{BlanchiniGiordano2017}.
Moreover, if $C_i^\top B_i=-1$, the stability of \eqref{diffinc}  is equivalent to the
existence of a PLF for both \eqref{diffinc} and \eqref{discdiffinc_sw}
and also equivalent to the fact that  Procedure \ref{proc} successfully stops in finite time.
If $C_i^\top B_i=-1$, systems \eqref{bdc} and \eqref{diffinc} admit a structural
(weak) Lyapunov function if and only if they admit a (weak) structural PLF.
Hence, the existence of a PLF guarantees stability,
and even asymptotic stability under structural non-singularity assumptions,
as summarised in the following result.
\begin{theorem} \label{nonsing}
\cite{BlanchiniGiordano2017}
Assume that system \eqref{diffinc}, under Assumptions \ref{as_eq}-\ref{largebound},
admits a (weak) PLF.
Then, it is asymptotically stable if and only if matrix $BDC$ is non-singular for all 
possible matrices $D$ satisfying \eqref{largeboundeq}.
\end{theorem}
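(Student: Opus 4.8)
The plan is to prove the two implications separately, handling the \emph{only if} direction by a direct construction and the \emph{if} direction by an invariance argument that exploits the rigidity of polyhedral level sets. For the \emph{only if} direction I would argue by contraposition: suppose $BD^\star C$ is singular for some diagonal $D^\star$ satisfying \eqref{largeboundeq}. Then there is $z^\star \neq 0$ with $BD^\star C z^\star = 0$, and choosing the constant selection $D(t)\equiv D^\star$ (which is admissible) makes $z(t)\equiv z^\star$ a complete solution of \eqref{diffinc}. Since this solution does not converge to the origin, \eqref{diffinc} is not asymptotically stable; hence asymptotic stability forces $BDC$ to be nonsingular for every admissible $D$.

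For the \emph{if} direction, assume $BDC$ is nonsingular for all $D$ in \eqref{largeboundeq} and let $V$ be the weak PLF. First I would record that $V$ is nonincreasing along every trajectory and, being positive definite and radially unbounded (by the full-rank hypothesis in Definition \ref{def:Lyap}), it yields boundedness of trajectories and marginal Lyapunov stability; in particular $V(z(t))$ decreases to a limit $V_\infty \ge 0$. The goal is to show $V_\infty = 0$. To this end I would invoke a LaSalle-type invariance principle for the differential inclusion: the $\omega$-limit set $\Omega$ of a bounded trajectory is nonempty, compact and \emph{weakly invariant}, and $V\equiv V_\infty$ on $\Omega$. If $V_\infty>0$, then $\Omega$ carries a complete, nontrivial trajectory, generated by some admissible selection $D(\cdot)$, along which the polyhedral function $V$ is exactly conserved, i.e.\ the Lyapunov derivative vanishes identically.

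The heart of the argument is to rule out such a conserved trajectory using nonsingularity together with the polyhedral structure. A conserved trajectory must remain on the boundary of the polytope $\{V\le V_\infty\}$ and, by conservation, tangent to it for almost all $t$. The key observation is that a polytope cannot be invariant under a genuine rotation: if some frozen matrix $BDC$ had a nonzero purely imaginary eigenvalue, the associated planar motion would sweep a smooth closed curve that cannot lie on the finitely many flat faces of $\partial\{V\le V_\infty\}$. Thus the mere existence of the weak PLF already excludes nonzero imaginary-axis eigenvalues for every frozen $BDC$, so the only admissible marginal mode is a zero eigenvalue, which nonsingularity forbids; I would then transfer this spectral exclusion to the inclusion to conclude that the conserved trajectory in $\Omega$ is identically zero, contradicting $V_\infty>0$. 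Equivalently, one may argue on the discrete side \eqref{discdiffinc_sw}: since $C_i^\top B_i=-1$, each generator $\Phi_i=I+B_iC_i^\top$ is an oblique projection ($\Phi_i^2=\Phi_i$) with $\ker\Phi_i=\operatorname{span}(B_i)$, so every step collapses one direction, and nonsingularity of $BDC$ is the structural condition guaranteeing that products of the $\Phi_i$ eventually collapse all directions, driving the discrete inclusion—and hence \eqref{diffinc}—to the origin.

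I expect the main obstacle to be the nonautonomous character of \eqref{diffinc}: classical LaSalle does not apply verbatim to a time-varying inclusion, so the invariance principle must be set up for differential inclusions, passing to limiting parameter signals when extracting $\Omega$. The second delicate point is making rigorous the passage from ``$V$ conserved along a complete trajectory'' to ``some frozen $BDC$ has a zero or purely imaginary eigenvalue,'' i.e.\ turning the geometric incompatibility between smooth conserved orbits and polytope boundaries into a clean spectral/structural statement; this is precisely where the unitary structure $C_i^\top B_i=-1$ and the full-rank hypotheses on $F$ (or $X$) in Definition \ref{def:Lyap} enter decisively.
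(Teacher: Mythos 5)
Your ``only if'' direction is correct and is the standard argument: a singular $B D^\star C$ with admissible constant $D^\star$ yields a nonzero equilibrium of \eqref{diffinc}, which is incompatible with asymptotic stability. (Note also that the paper itself does not prove Theorem~\ref{nonsing}; it imports it from the cited reference, so your attempt must be measured against that proof.) The genuine gap is in your ``if'' direction, precisely at the sentence ``I would then transfer this spectral exclusion to the inclusion.'' Your spectral exclusion is a statement about \emph{frozen} matrices only: the weak PLF rules out nonzero imaginary eigenvalues of every constant $B\bar D C$ (this geometric observation is sound, since a nondegenerate elliptical orbit lies in a plane through the origin and therefore cannot be covered by the finitely many flat faces of a level set not containing the origin), and nonsingularity rules out zero eigenvalues, so every frozen matrix is Hurwitz. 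But the conserved trajectory produced by your invariance argument is generated by a \emph{time-varying} selection $D(\cdot)$, and for linear differential inclusions ``every frozen matrix is Hurwitz'' does not imply asymptotic stability: time variation can sustain non-decaying motion even when each constant parameter value is exponentially stable. Worse, the implication you need --- weak PLF plus all frozen matrices Hurwitz implies asymptotic stability of \eqref{diffinc} --- is, under the weak PLF, logically equivalent to the theorem itself (frozen Hurwitz-ness $\Leftrightarrow$ structural nonsingularity once the PLF excludes imaginary eigenvalues), so the ``transfer'' step assumes the conclusion. A complete proof must exploit the polyhedral structure \emph{along the time-varying conserved trajectory}: it stays on the finitely many flat faces of $\{V\le V_\infty\}$, conservation forces $n^\top B D(t) C z(t)=0$ on each visited facet with normal $n$, and a compactness/limit argument --- this is where Assumption~\ref{largebound}, which makes the set of admissible $D$ compact, enters decisively --- must extract a constant admissible $\bar D$ and a nonzero $\bar z$ with $B\bar D C\bar z=0$, contradicting nonsingularity directly. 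This extraction is the heart of the matter and is entirely missing from your proposal; you flag it as a ``delicate point,'' but it is not a technicality to be smoothed over, it is the theorem.

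Your discrete-side fallback does not repair this. You correctly note that $C_i^\top B_i=-1$ makes each $\Phi_i=I+B_iC_i^\top$ idempotent with $\ker\Phi_i=\operatorname{span}(B_i)$; but idempotence means each $\Phi_i$ has eigenvalue $1$ with multiplicity $n-1$, so the discrete inclusion \eqref{discdiffinc_sw} is \emph{never} asymptotically stable: the constant selection $\Phi(k)\equiv\Phi_1$ fixes every point of $\operatorname{range}\,\Phi_1$, and $\Phi_1^k=\Phi_1\not\to 0$. Hence products of the $\Phi_i$ do not ``eventually collapse all directions,'' and no nonsingularity hypothesis can make them do so. The equivalence between \eqref{diffinc} and \eqref{discdiffinc_sw} established in the literature concerns marginal stability and the existence of a common PLF, not asymptotic stability, so no conclusion about convergence of \eqref{diffinc} can be drawn along that route.
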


Asymptotic stability is shown to be exponential in \cite{FBBCG20}.

\begin{remark} \label{rem:nons}
Structural non-singularity is easy to check, as shown in \cite{GCFB16}:
it is equivalent to $\det [-B \hat D C] > 0$ for all matrices $\hat D$ on the vertices of the hyper-rectangle in \eqref{largeboundeq}. 
\end{remark}

\subsection{Generalised Petri net model}\label{subsec:petri}
Procedure \ref{proc} can be interpreted as the evolution of a Generalised Petri Net (GPN), a discrete-event system that, albeit similar to a Petri net,
does not work as a Petri net because the number of tokens at each node can be either positive or negative \cite{MuYam90}. The integer vector $\tau(k)$ denotes the presence of $\tau_\ell(k)$ tokens in position $\ell$ at time $k$.

\begin{example}\label{ex:Telemann_procedure}
The system in Example \ref{ex:Telemann} can be associated with the difference inclusion $y(k+1) = \Phi(k) y(k)$ where, at each step $k$, $\Phi(k)$ is one of the following matrices:
{\small \begin{equation*}
\Phi_1 = \left[ \begin{smallmatrix}
0 & 0 & 0 \\
1 & 1 & 0 \\
1 & 0 & 1 
\end{smallmatrix} \right], \, 
\Phi_2=\left[ \begin{smallmatrix}
1 & 0 & 0 \\
0 & 0 & 0 \\
0 & 0 & 1 
\end{smallmatrix}\right], \,
\Phi_3=\left[ \begin{smallmatrix}
1 & 0 & -1 \\
0 & 1 & ~~0 \\
0 & 0 & ~~0 
\end{smallmatrix}\right], \,
\Phi_4=\left[ \begin{smallmatrix}
~~0 & 0 & 0 \\
~~0 & 1 & 0 \\
-1 & 0 & 1 
\end{smallmatrix} \right].
\end{equation*}}
We start from the unit ball of the $1$-norm, $\mathbb{X}^0 = [-I~~I] = [-v_1~-v_2~-v_3~~v_1~~v_2~~v_3]$, and consider just the positive vertices (the evolution of the others can be immediately obtained, being the opposite). Vertex $v_1 = [1~0~0]^\top$ is transformed into $\Phi_1 v_1 = [0~1~1]^\top = v_4$, $\Phi_2 v_1 = \Phi_3 v_1 =[1~0~0]^\top = v_1$, $\Phi_4 v_1 = [0~0~-1]^\top = -v_3$; vertex $v_2 = [0~1~0]^\top$ is transformed into $\Phi_1 v_2 = \Phi_3 v_2 = \Phi_4 v_2 = [0~1~0]^\top = v_2$, $\Phi_2 v_2 = [0~0~0]^\top$; vertex $v_3 = [0~0~1]^\top$ is transformed into $\Phi_1 v_3 = \Phi_2 v_3 = \Phi_4 v_3 = [0~0~1]^\top = v_3$, $\Phi_3 v_3 = [-1~0~0]^\top = -v_1$.
The sole newly generated vertex is $v_4= [0~1~1]^\top$ (and its opposite). The procedure applied to $v_4$ gives $\Phi_1 v_4 = \Phi_4 v_4 = [0~1~1]^\top = v_4$, $\Phi_2 v_4  =[0~0~1]^\top = v_3$, $\Phi_3 v_1 = [-1~1~0]^\top = v_5$.
Applying the procedure to the only new vertex, $v_5= [-1~1~0]^\top$, gives $\Phi_1 v_5 = [0~0~-1]^\top = -v_3$, $\Phi_2 v_5  =[-1~0~0]^\top = -v_1$, $\Phi_3 v_5 = [-1~1~0]^\top = v_5$, $\Phi_4 v_5 = [0~1~1]^\top = v_4$.
No new vertices are generated at this step, hence the procedure stops successfully: the system admits a PLF having unit ball
$\mbox{conv}\{[-X~X]\}$, with $X = [v_1~~v_2~~v_3~~v_4~~v_5]$.
\end{example}

\begin{figure}[htb!]
\centering
\subfloat[First step, initial marking {$v_1=[1~0~0]^\top$}.]{\includegraphics[width=7cm]{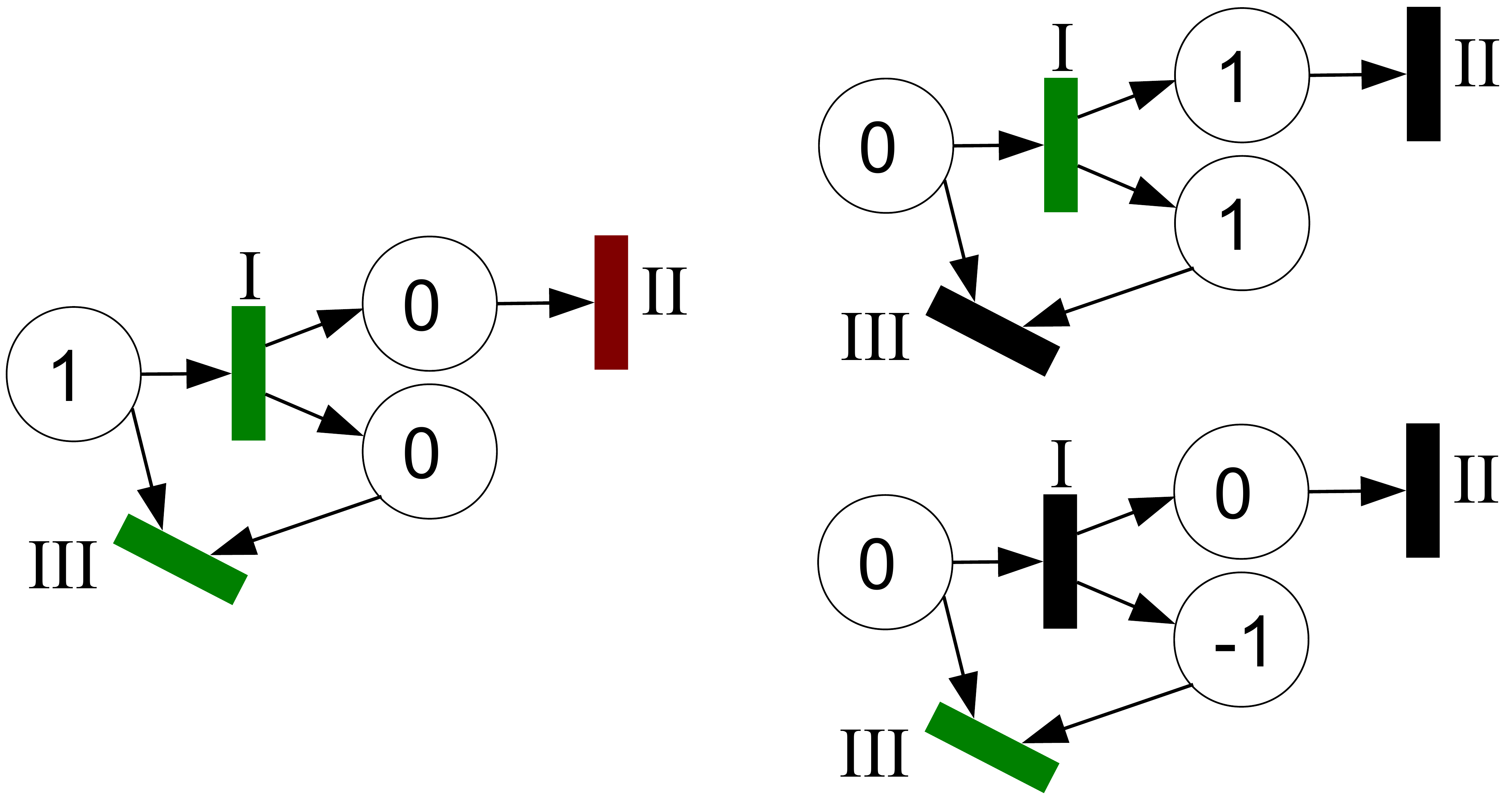}} \\
\subfloat[First step, initial marking {$v_2=[0~1~0]^\top$}.]{\includegraphics[width=7cm]{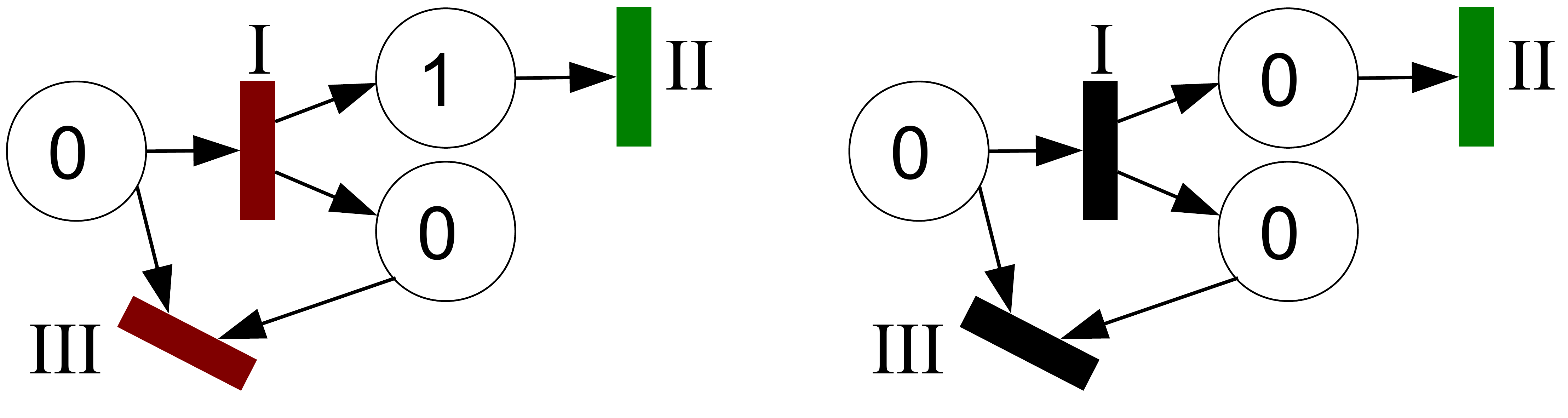}} \\
\subfloat[First step, initial marking {$v_3=[0~0~1]^\top$}.]{\includegraphics[width=7cm]{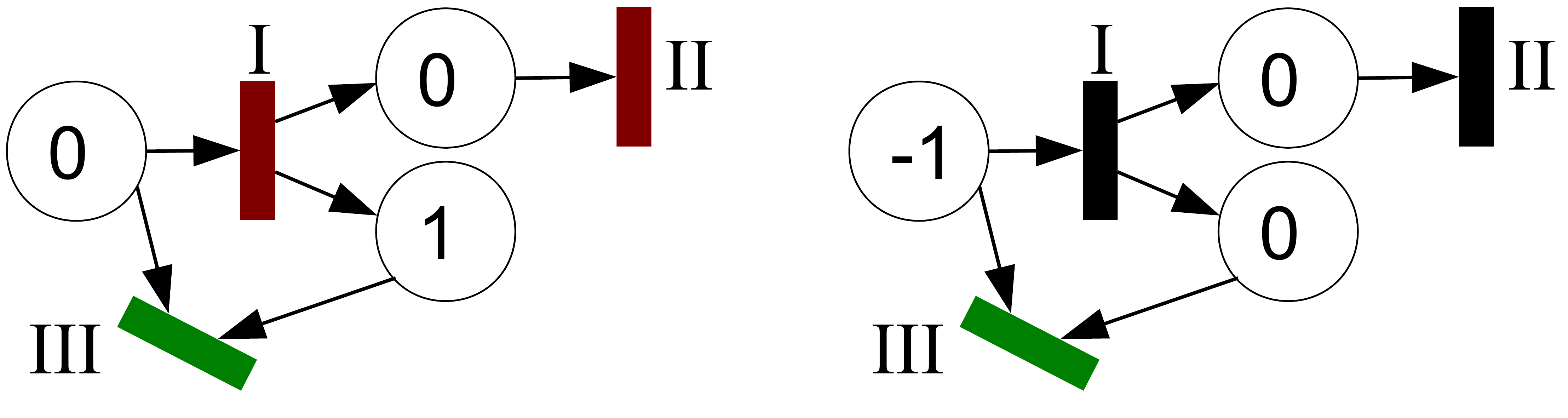}}\\
\subfloat[Second step, initial marking {$v_4=[0~1~1]^\top$}.]{\includegraphics[width=7cm]{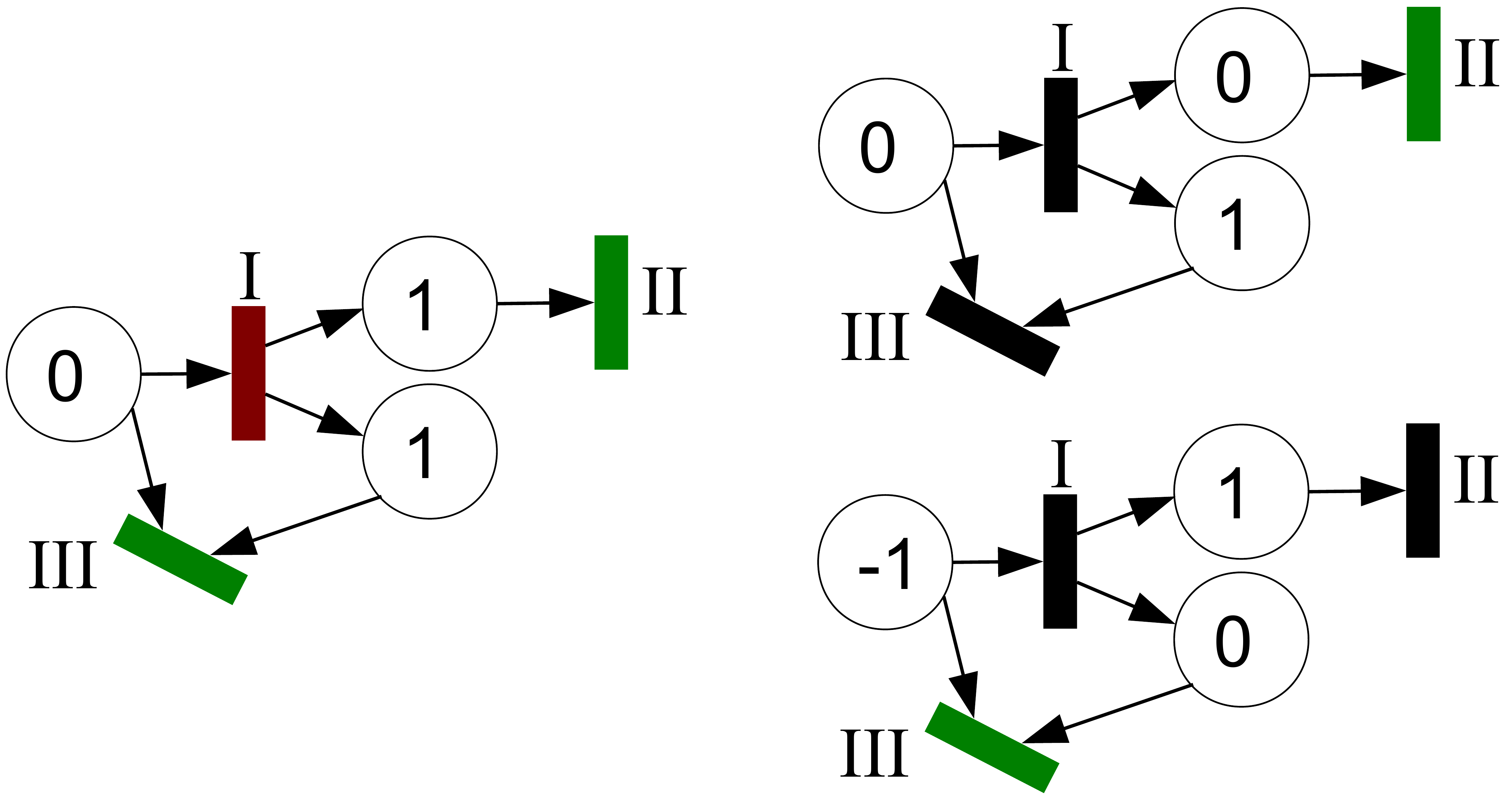}}\\
\subfloat[Third step, initial marking {$v_5=[-1~1~0]^\top$}.]{\includegraphics[width=7cm]{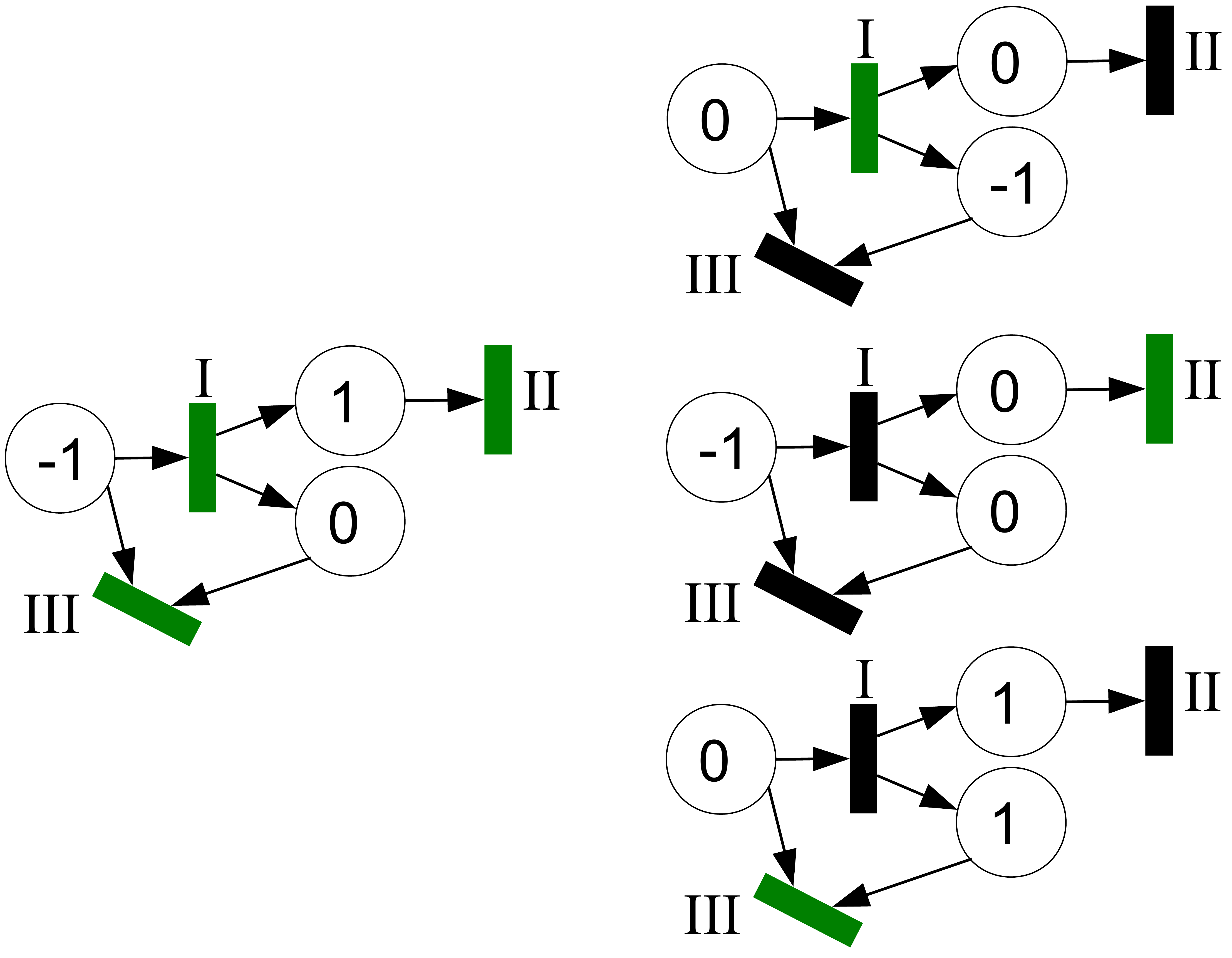}}
\caption{Discrete-event interpretation of the procedure evolution. On the left side, enabled transitions are in green and non-enabled transitions in red. On the right, the activated transition is in green.}
\label{fig:Petri_evolution}
\end{figure}

The evolution of the discrete-time system in the numerical Procedure \ref{proc} can be related to the evolution of a particular discrete-event system, in which the initial conditions represent an \emph{initial marking} $\tau(0)$, assigning an integer number of tokens to each node (associated with each of the chemical species), and a transition (associated with each of the black rectangles in Fig.~\ref{fig:Petri}, right, i.e. with each of the reactions occurring in the network) is \emph{enabled} whenever at least one of the starting nodes of the transition contains a non-zero number of tokens. When either the number of tokens in the starting node is positive, or the starting nodes are two and the number of tokens is non-negative in both (and positive in at least one of them), then the transition takes tokens from the starting node(s) and moves them to the arrival node(s), if explicitly present (otherwise, they simply disappear). When either the starting node is one only and the number of tokens therein is negative, or the starting nodes are two and the number of tokens is non-positive in both (and negative in at least one of them), then the transition takes negative tokens from the starting node(s) and moves them to the arrival node(s), again, if explicitly present.  When a transition is enabled, and performed, all the tokens are moved until one of the starting nodes has zero tokens.
If one token number is negative and the other positive, their effects are superposed. 

\begin{example}\label{Ex:3}
For the system in Examples \ref{ex:Telemann} and \ref{ex:Telemann_procedure}, the GPN evolution is shown in Fig.~\ref{fig:Petri_evolution}: on the left side, the initial marking is illustrated (enabled transitions are in green, non-enabled transitions in red); on the right side, the new marking generated by the action of each of the enabled transitions is illustrated.
Reaction I corresponds to the evolution matrix $\Phi_1$ (and generates the same outcome for the same initial conditions), reaction II corresponds to matrix $\Phi_2$ (and again generates the same outcome for the same initial conditions), while reaction III corresponds to matrices $\Phi_3$ and $\Phi_4$. In this latter case, the outcome of (at least) one of the two evolution matrices is the same as that of the transition related to reaction III in the discrete-event evolution, while the outcome of the other (if different) is always the unchanged input vertex. Interestingly, the evolution is the very same as that of the numerical procedure, and the same new vertices are generated at each iteration.
\end{example}

\begin{proposition}\label{prop}
If $C_i^\top B_i =-1$ $\forall i = 1,\dots,q$, the sequence generated by the discrete-time switching system \eqref{discdiffinc_sw}-\eqref{family_M} uniquely corresponds to the evolution of a GPN discrete-event system if the initial vector $y(0)=\tau(0)$ has integer components.
\end{proposition}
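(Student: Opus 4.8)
The plan is to read off, from the rank-one structure of each $\Phi_i$, which node is emptied and where its tokens are sent, and then to verify that this reproduces the GPN firing rule on integer markings. First I would make the action of $\Phi_i$ explicit. By the construction of $B$ and $C$ recalled above, the $i$-th row $C_i^\top$ has a single nonzero entry, at some index $r_i$, equal to the sign $s_i \in \{+1,-1\}$ of the associated partial derivative, while the column $B_i$ coincides with a stoichiometric column $S_j$ and is therefore a $\{-1,0,+1\}$-vector by Assumption \ref{unit}. Hence $\Phi_i = I + B_i C_i^\top$ acts as $\Phi_i y = y + s_i\, y_{r_i}\, B_i$. Since $r_i$ is, by Assumption \ref{incr}, a reactant of reaction $j$, we have $(B_i)_{r_i} = -1$, and the hypothesis $C_i^\top B_i = s_i (B_i)_{r_i} = -1$ forces $s_i = +1$, so that $\Phi_i y = y + y_{r_i} B_i$, i.e. the map adds to $y$ the integer multiple $y_{r_i}$ of the unitary vector $B_i$.

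Two elementary facts then drive the correspondence. Integrality is preserved: since $B_i \in \mathbb{Z}^n$ and $y_{r_i} \in \mathbb{Z}$ whenever $y \in \mathbb{Z}^n$, each $\Phi_i$ maps $\mathbb{Z}^n$ into itself, so by induction every $y(k)$ generated by \eqref{discdiffinc_sw}--\eqref{family_M} is integer and may be read as a marking $\tau(k)$. The source node is emptied: the hypothesis gives $(\Phi_i y)_{r_i} = y_{r_i}\bigl(1 + C_i^\top B_i\bigr) = 0$. Applying $\Phi_i$ therefore removes all $y_{r_i}$ tokens from node $r_i$ and, through the remaining $\pm 1$ entries of $B_i$, deposits $+y_{r_i}$ at each product node and $-y_{r_i}$ at a possible second reactant node; this is exactly the token-transfer rule stated for the GPN, with the sign of $y_{r_i}$ selecting the ordinary-token case from the anti-token (debit) case.

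It then remains to pair matrix steps with transition firings and to argue uniqueness. I would associate $\Phi_i$ with the firing of the transition of reaction $j$ that empties its reactant $r_i$, which produces precisely the marking $\Phi_i \tau$; when $y_{r_i}=0$ the transition is not enabled from that reactant and $\Phi_i y = y$ returns the unchanged marking. A bi-molecular reaction has two reactants, hence two indices sharing the same $B_i$ but with distinct $r_i$, and so contributes two matrices realising the two admissible firings (emptying one reactant or the other), the one attached to an already-empty reactant reducing to the identity; this is the behaviour recorded in Example \ref{Ex:3}. Since a firing sequence $\{i(k)\}$ simultaneously fixes the matrix product and the ordered list of fired transitions, the switching evolution and the marking evolution coincide term by term and are in bijection. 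The step I expect to require the most care is this bi-molecular matching: one must confirm that the enabledness and sign conditions of the GPN rule select exactly the outcomes of the two matrices sharing a reaction, with no spurious or missing marking, so that the correspondence is genuinely one-to-one and not merely an inclusion.
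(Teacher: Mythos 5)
Your proof is correct and follows essentially the same route as the paper's: both exploit the rank-one structure $\Phi_i = I + B_i C_i^\top$ to show that applying $\Phi_i$ to an integer vector empties the reactant node $r_i$, adds $y_{r_i}$ tokens to each product node and subtracts $y_{r_i}$ from the co-reactant node, which is exactly the GPN transition rule, yielding the one-to-one correspondence. Your explicit derivation of $s_i=+1$ from the hypothesis $C_i^\top B_i=-1$, the integrality-preservation remark, and the careful treatment of the two matrices attached to a bi-molecular reaction are details the paper leaves implicit (or delegates to Example~\ref{Ex:3}), but they do not change the argument.
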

\begin{proof}
To keep the notation simple, we assume that all reactions are functions of at most two variables.
The integer operator $\Phi_h$ corresponds to the $h$th non-zero
derivative, say $\partial g_i/\partial x_j$. Consider the $i$th reaction
$X_{ j } + X_w \react {g_i} X_\ell$.
Then,
$$
\Phi_h =I +B_h C_h^\top  = I + S_i e_j^\top
$$ 
where $S_i$ is the $i$th column of $S$ and $e_j^\top$ is the $j$th canonical row vector.
Hence, $\Phi_h$ has all columns equal to the identity matrix excluding the $j$th. Column $j$ has the same entries as $B_h=S_i$, excluding the diagonal entry $(j,j)$ which is $0$ (see Example \ref{ex:Telemann_procedure}).
Operator $\Phi_h$ applied to any integer vector $\tau$ as $\tau^+ = \Phi_h \tau$, a shorthand notation for $\tau(k+1)=\Phi_h \tau(k)$, corresponds to the following {\bf transition $T_h$} performed on the Petri net:\\
1) $\tau^+_j=0$ (remove $\tau_j$, i.e. all, tokens from node $j$);
\\
2) $\tau^+_\ell=\tau_\ell+  B_{\ell h}  \tau_j =\tau_\ell +\tau_j $ (add $\tau_j$ tokens to node $\ell$);
\\
3) $\tau^+_w=\tau_w+ B_{wh} \tau_j = \tau_w-\tau_j$ (remove $\tau_j$ tokens from $w$).

This defines a one-to-one correspondence between matrix $\Phi_h$ applied on integer vectors and the corresponding transition $T_h$ (such that $\tau^+ = T_h \tau$).
\end{proof}

Henceforth, we denote by $\Phi(h)$, $h=0,1,\dots$, the generic matrix sequence
$\Phi(h) \in \{\Phi_1,\Phi_2, \dots, \Phi_q\} =\mathcal{F}$.
The family of all their products,
\begin{equation}\label{eq:productset}
\Pi(\mathcal{F})\doteq 
\left \{ \prod_{h=0}^K\Phi(h), ~K\geq 0,~~\Phi(h) \in \mathcal{F} \right \},
\end{equation}
is an algebraic semigroup under the multiplication operation, i.e., if both $P_1$ and $P_2$ are in $\Pi(\mathcal{F})$, then also $P_1 P_2 \in \Pi(\mathcal{F})$.

\begin{theorem}\label{th_DES}
If $C_i^\top B_i =-1$ $\forall i = 1,\dots,q$, the following statements are equivalent:\\
(i)  for any initial integer token distribution vector $\tau(0)=\tau_0$,
the set of possible evolutions of the GPN $\tau(k+1) = T_k \tau(k)$,
$$
{\mathcal{R}}(\tau_0) = \{\tau  = T_\ell \circ T_{\ell-1}  \dots  \circ  T_0 ~\tau_0, \ell\geq 0, ~T_i~\mbox{arbitrary}\},
$$
i.e. the reachable set from $\tau_0$, is finite.\\
(ii) The semigroup
$\Pi(\mathcal{F}) $ in \eqref{eq:productset} is finite.\\
(iii) Procedure \ref{proc} stops in finite time.
\end{theorem}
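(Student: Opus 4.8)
The plan is to prove the cycle (i) $\Rightarrow$ (ii) $\Rightarrow$ (iii) $\Rightarrow$ (i), relying throughout on the correspondence established in Proposition \ref{prop}: on integer vectors the linear operator $\Phi_h$ acts exactly as the Petri-net transition $T_h$, and since every $\Phi_h$ is an integer matrix this correspondence is preserved under composition (each intermediate vector stays integer). Consequently, for any product $P = \Phi(K)\cdots\Phi(0) \in \Pi(\mathcal{F})$ and any integer marking $\tau_0$, the vector $P\tau_0$ coincides with $T_{i_K}\circ\cdots\circ T_{i_0}\,\tau_0 \in \mathcal{R}(\tau_0)$, and conversely $\mathcal{R}(\tau_0) = \{P\tau_0 : P \in \Pi(\mathcal{F})\}$ up to the marking $\tau_0$ itself. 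This dictionary between operator products and reachable markings is the backbone of all three implications.

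To obtain (i) $\Rightarrow$ (ii), I would specialise (i) to the canonical basis vectors $e_1,\dots,e_n$, each of which is an admissible integer marking. Any $P \in \Pi(\mathcal{F})$ is an $n\times n$ integer matrix, hence completely determined by its columns $Pe_1,\dots,Pe_n$. By the dictionary $Pe_j \in \mathcal{R}(e_j)$, and each $\mathcal{R}(e_j)$ is finite by hypothesis, so the columns admit only finitely many values and there are at most $\prod_{j}|\mathcal{R}(e_j)|$ distinct products $P$; thus $\Pi(\mathcal{F})$ is finite. For (ii) $\Rightarrow$ (iii), note that every vertex generated by Procedure \ref{proc} has the form $Pe_j$ for some $P \in \Pi(\mathcal{F})\cup\{I\}$ and some $j$; if $\Pi(\mathcal{F})$ is finite, all such vertices lie in the fixed finite set $V = \{\pm Pe_j\}$. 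Since the balanced polytopes $\mathcal{Y}^k$ are nondecreasing (vertices are only accumulated) and confined to $V$, the sequence takes finitely many values and must stabilise, so at some step $\mathcal{Y}^{k+1}=\mathcal{Y}^k$ and the procedure stops.

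To close the cycle with (iii) $\Rightarrow$ (i), observe that the stopping condition $\mathcal{Y}^{k+1}=\mathcal{Y}^k$ certifies that the limit polytope $\bar{\mathcal{Y}}$ is $\mathcal{F}$-invariant, i.e.\ $\Phi_i \bar{\mathcal{Y}} \subseteq \bar{\mathcal{Y}}$ for all $i$, whence $P\bar{\mathcal{Y}} \subseteq \bar{\mathcal{Y}}$ for every $P \in \Pi(\mathcal{F})$ by induction on the product length. Because $\bar{\mathcal{Y}}$ contains the initial unit ball $\mathcal{Y}^0$, it is a bounded neighbourhood of the origin, so any integer marking $\tau_0$ satisfies $\tau_0 \in \lambda\bar{\mathcal{Y}}$ for $\lambda = \|\tau_0\|_1$; the scaled set $\lambda\bar{\mathcal{Y}}$ is again invariant, hence $P\tau_0 \in \lambda\bar{\mathcal{Y}}$ for all $P$. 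Thus $\mathcal{R}(\tau_0)$ is contained in the bounded polytope $\lambda\bar{\mathcal{Y}}$, and since every reachable marking is an integer vector and a bounded polyhedron contains only finitely many lattice points, $\mathcal{R}(\tau_0)$ is finite.

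The main obstacle is the passage from the \emph{pointwise} data in (i) to the \emph{uniform} objects in (ii) and (iii): finiteness of each reachable set does not by itself bound the family of operators, and the argument hinges on two structural facts that must be invoked with care — that an integer matrix is pinned down by its action on the basis, and that a bounded polyhedron contains only finitely many integer points. The latter is precisely what converts the geometric stopping criterion of the procedure into a combinatorial finiteness statement, so the delicate part is establishing boundedness and $\mathcal{F}$-invariance of $\bar{\mathcal{Y}}$ (together with the monotonicity of the $\mathcal{Y}^k$), on which the whole equivalence rests.
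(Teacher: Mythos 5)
Your overall architecture --- the cycle (i) $\Rightarrow$ (ii) $\Rightarrow$ (iii) $\Rightarrow$ (i) built on the dictionary of Proposition \ref{prop} --- is the same as the paper's, and your first two implications are essentially correct: your column-counting argument for (i) $\Rightarrow$ (ii) is, if anything, more direct than the paper's (which passes through a uniform norm bound before invoking integrality). One caveat on (ii) $\Rightarrow$ (iii): the monotonicity ``vertices are only accumulated'' is not built into Procedure \ref{proc}, which replaces $Y^k$ by $[\Phi_1 Y^k \cdots \Phi_q Y^k]$ without retaining the old columns. It does hold for $k \geq 1$, but only because $C_i^\top B_i = -1$ makes every $\Phi_i$ idempotent, $\Phi_i^2 = \Phi_i$ (as noted in the remark following Corollary \ref{Cor:Corollary}), so that any length-$k$ product is also a length-$(k+1)$ product; without this observation, finiteness of the vertex alphabet alone would not exclude a periodically cycling, never-stopping sequence $\mathcal{Y}^k$. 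You assert the monotonicity without proof; to be fair, the paper's own one-line proof of this step has the same lacuna, and your version is the more careful of the two once idempotency is invoked.

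The genuine gap is in (iii) $\Rightarrow$ (i): you claim $\bar{\mathcal{Y}} \supseteq \mathcal{Y}^0$, hence that $\bar{\mathcal{Y}}$ is a neighbourhood of the origin and every integer $\tau_0$ lies in $\lambda\bar{\mathcal{Y}}$. This is precisely where monotonicity can fail, namely at the step from $k=0$ to $k=1$, and then $\bar{\mathcal{Y}}$ need not contain $\mathcal{Y}^0$ and can even be degenerate. Concretely, take $n=2$ with the single reaction $X_1 \react{} X_2$: then $q=1$, $\Phi_1 = \left[\begin{smallmatrix}0 & 0\\ 1 & 1\end{smallmatrix}\right]$, the procedure gives $Y^1=[e_2~e_2]$, $\mathcal{Y}^2=\mathcal{Y}^1=\mbox{conv}\{\pm e_2\}$, and it stops with $\bar{\mathcal{Y}}$ a segment not containing $e_1$; no dilation $\lambda\bar{\mathcal{Y}}$ contains $\tau_0=e_1$, so your invariance argument yields no bound whatsoever on $\mathcal{R}(e_1)$. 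The implication is still true, but it must be proved by bounding \emph{images} rather than initial conditions, which is what the paper does: using idempotency, pad any product to length at least the stopping index, so that every $P \in \Pi(\mathcal{F})$ has all of its columns $Pe_j$ in $\bar{\mathcal{Y}}$; then $P\tau_0 = \sum_j (\tau_0)_j\, Pe_j \in \|\tau_0\|_1\, \bar{\mathcal{Y}}$ is bounded for every integer $\tau_0$, regardless of whether $\tau_0$ lies in the span of $\bar{\mathcal{Y}}$, and boundedness plus integrality plus Proposition \ref{prop} gives finiteness of $\mathcal{R}(\tau_0)$. This uniform bound on the products is the actual content of the paper's step ``the family admits the PLF induced by $\bar{\mathcal{Y}}$, hence integer trajectories are bounded, hence finite.''
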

\begin{proof}
%
(i) $\Rightarrow$ (ii):  If the GPN generates a finite number of configurations
given any initial token distribution $\tau(0)$, then
from Proposition \ref{prop}, for any integer $y(0)$,
the number of states reached by
$y(k+1) = \Phi(k) y(k)$, is finite, hence bounded.
Since the generic trajectory given $y(0)$
has the form
$$
y(k) = \Phi(k)\Phi(k-1) \dots\Phi(1) \Phi(0) y(0),
$$
this implies that the products in the set $\Pi(\mathcal{F})$ are uniformly bounded by some constant $\mu>0$: $\|\prod_{h=0}^K\Phi(h)\| \leq \mu$. Then (ii)
follows because all these products are integer matrices.
\\
(ii) $\Rightarrow$ (iii):
Procedure \ref{proc} does stop in finite time,
because all the columns of the matrices $Y^k$ are generated as $\prod_{h=0}^K\Phi(h)Y^0$, which are in a finite number.
\\
(iii) $\Rightarrow$ (i):
If Procedure \ref{proc} stops in finite time, the matrix family $\mathcal{F}$
admits the PLF induced by the final set $\bar{\mathcal{Y}}$.
Hence, for any integer initial condition $y(0)$ the sequence $y(k)$ is bounded and, being integer, it is finite.  Statement (i) then follows from Proposition \ref{prop}.
\end{proof}

\subsection{Interpretation of the results}
Theorem \ref{th_DES} has an interesting biochemical interpretation, where tokens can be seen as molecules. The 
transition operators $T_k$ defined in the proof of Proposition \ref{prop}
remove the tokens (if any) from some nodes (associated with species) so as to generate tokens at other nodes.
The existence of a PLF is equivalent to the fact that no (infinite) sequence of these transitions can drive the token count to infinity at some node.

Consider the important special case of mono-molecular reaction networks, 
where all the internal reactions have the form $X_i\react{g_k} X_j$.
Any transition just moves all tokens in a node (possibly a negative number) to another
node leaving unchanged the total amount. For instance,
if we initialise the network with just a token at node $1$, the set of all possibly reached
states corresponds to a single token at some node.
All mono-molecular reaction networks are associated with bounded GPNs: indeed
they are nonlinear compartmental systems, well known to be structurally stable \cite{MaedaKodamaOhta1978}.

Besides reactions of the form $X_i \react{g_k} X_j$, let us consider internal reactions of the form $X_h+X_r \react{g_s}\emptyset$ and $X_w \react{g_u} \emptyset$. The former new reaction introduces operators such that, if $x_h$ tokens are present at node $X_h$, they are removed ($x_h^+=0$) and the opposite amount appears at node $X_r$, $x_r^+=x_r-x_h$. The latter new reaction removes all the tokens present at node $X_w$ ($x_w^+=0$). Although the total amount of tokens now is not conserved, it cannot increase: hence, these GPNs are also bounded and the corresponding networks are structurally stable, in agreement with \cite{BlanchiniGiordano2014}.

As a simple unbounded case, consider the reactions
$\emptyset  \react{u_1}  X_1$, $X_1\react{g_1} X_2 + X_3$, $X_2 \react{g_2} X_1$,
$X_3 \react{g_3} X_1$, $ X_2 \react{\tilde g_2} \emptyset$, $X_3 \react{\tilde g_3} \emptyset$,
associated with
\begin{equation*}
\begin{cases}
\dot x_1 = u_1 -g_1(x_1) + g_{2}(x_2) + g_{3}(x_3) \\
\dot x_2 =      g_1(x_1)-g_{2}(x_2) -\tilde g_{2}(x_2)\\
\dot x_3 = g_1(x_1)-g_{3}(x_3)-\tilde g_{3}(x_3)
\end{cases}
\end{equation*}
The corresponding Petri net is not bounded. Start with just one token
at $X_1$. Then $\partial g_1/\partial x_1$ can act  producing two tokens,
one at $X_2$ and one at $X_3$. Then $g_2$ and $g_3$ can both act
to transfer the two tokens back at $X_1$.
Repeating the argument, we see an unbounded increase of tokens at node $1$.
Indeed, the Jacobian $J=BDC$, where
\begin{equation*}
B = \left[ \begin{smallmatrix}
-1 & ~1 & ~1 & ~0~  & ~0  \\
~1 & -1 & ~0 & -1~  & ~0 \\
~1 & ~0 & -1 & ~0~  & -1
\end{smallmatrix} \right] ~\mbox{and}~  
C =\left[ \begin{smallmatrix}
1 & 0 & 0\\
0 & 1 & 0\\
0 & 0 & 1\\
0 & 1 & 0\\
0 & 0 & 1
\end{smallmatrix} \right],
\end{equation*}
is not structurally Hurwitz: $\det[-J]=D_1(D_4D_5-D_2D_3)$, the constant term of the characteristic polynomial, can be negative.

\begin{remark}\label{stoi}{\bf (Stoichiometric compatibility class.)}
If the network evolves in a proper stoichiometric compatibility class, namely
$\sigma^\top S=0$ for some vector $\sigma \neq 0$, so that $ \sigma^\top B=0$, we have a conservation law: $\sigma^\top z(t)$ is constant.
This property is preserved by the discrete operators $\Phi_h$:
$\sigma^\top y^+ = \sigma^\top[I+B_h C^\top_h]y = \sigma^\top y$.
For our analysis, we can reduce the system
by applying a state transformation that turns $BDC$ into $T^{-1}BDCT \doteq \hat B D \hat C$ and then neglecting some of the variables (see Example \ref{translation}); the key condition $C_h^\top B_h=  \hat C_h^\top \hat B_h= -1$ is invariant.
Therefore, if $T$ and its inverse are integer matrices, as usually happens,
the proposed theory applies without changes.
\end{remark}

Interestingly, the convergence of Procedure \ref{proc} (i.e., the fact that the procedure stops in finite time) implies that the joint spectral 
radius \cite{Jun09} of the matrix family $\mathcal{F}$ is equal to one:
$$
 \sigma(\mathcal{F})  
\doteq \lim_{k \to 0}\max_{\Phi(\cdot) \in \mathcal{F}} \|  \Phi(1)  \Phi(2)  \Phi(3) \dots  \Phi(k)\|^{\frac{1}{k}} = 1.
$$
Indeed, the convergence of Procedure \ref{proc} implies the boundedness
of the trajectories, hence $\sigma(\mathcal{F})\leq 1$.
Conversely, since $C_k^\top B_k =-1$, matrix $I+B_kC_k$ admits $1$ as an eigenvalue, hence the spectral radius cannot be smaller than $1$.

Boundedness of the GPN evolution is equivalent to polyhedral stability, and the GPN evolves according to an asynchronous mechanism: hence, the different time scales of the system components play no role in defining its stability properties. To formalise this concept, we modify system \eqref{sys} as
 \begin{equation}
\label{time_const}
\Theta \dot x = S g(x) + g_0,
 \end{equation}
where $\Theta=\mbox{diag}\{\theta_1,\theta_2, \dots, \theta_n\} \succ 0$ is a diagonal matrix of positive time constants.
\begin{proposition}
Structural polyhedral stability, guaranteed when Procedure \ref{proc} stops in finite time, implies the structural stability of system \eqref{time_const} for any arbitrary diagonal $\Theta \succ 0$.
\end{proposition}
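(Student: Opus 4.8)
The plan is to reduce the time-constant system \eqref{time_const} to the original differential inclusion \eqref{diffinc} by a diagonal change of variables, so that the polyhedral Lyapunov function (PLF) already guaranteed by the convergence of Procedure \ref{proc} can be transported to the new system. First I would apply the equilibrium shift $z = x - \bar x$ to \eqref{time_const}, obtaining $\Theta \dot z = B D(z) C z$, and then absorb it into the structural differential inclusion $\dot z = \Theta^{-1} B D(t) C z$ with $D(t)$ an arbitrary positive-definite diagonal matrix, exactly as \eqref{bdc} is absorbed into \eqref{diffinc}.

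The core step is the change of variables $w = \Theta z$. Since $\Theta \succ 0$ is diagonal and invertible, this gives $\dot w = \Theta \dot z = B D(t) C \Theta^{-1} w$, i.e. the inclusion $\dot w = B D(t) \hat C w$ with $\hat C = C \Theta^{-1}$. Here I would exploit the defining structural property of the $BDC$-decomposition, namely that each row $C_k^\top$ of $C$ has a single nonzero entry, located in the position $i(k)$ of the variable on which the $k$th partial derivative depends. Consequently $\hat C_k^\top = C_k^\top \Theta^{-1} = \theta_{i(k)}^{-1} C_k^\top$: post-multiplication by $\Theta^{-1}$ merely rescales each row of $C$ by the positive scalar $\theta_{i(k)}^{-1}$.

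Next I would reabsorb these scalars into the uncertain diagonal. Writing the inclusion as $\dot w = \sum_{k=1}^q D_k(t)\, \theta_{i(k)}^{-1} B_k C_k^\top w$ and setting $\tilde D_k(t) = D_k(t)/\theta_{i(k)}$, I note that $\tilde D_k(t) > 0$ ranges over a valid (merely rescaled) uncertainty set, since the $\theta_{i(k)}$ are fixed positive constants. Hence $\dot w = B \tilde D(t) C w$ is literally an instance of the original differential inclusion \eqref{diffinc}, with the same generating family $\mathcal{F}$ in \eqref{family_M} and the same invariant $C_k^\top B_k = -1$. Because Procedure \ref{proc} stops in finite time, \eqref{diffinc} admits the weak PLF $V(\cdot) = \|F\cdot\|_\infty$ as in \eqref{polyfuncF}, and this same $V$ is therefore a weak Lyapunov function for the $w$-system for every admissible $\tilde D(t)$, hence for every $D(t)$.

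Finally, pulling back through $w = \Theta z$, the function $V_\Theta(z) = \|F\Theta z\|_\infty$ is a PLF for the $\Theta$-system: $F\Theta$ has full column rank because $F$ does and $\Theta$ is invertible, so $V_\Theta$ is a genuine polyhedral function, and its Lyapunov derivative along the $\Theta$-system inherits the sign of the one of $V$ along \eqref{diffinc} by the chain rule. Structural (marginal) stability of \eqref{time_const} then follows for every diagonal $\Theta \succ 0$. The point requiring care — and the step I would expect to be the main obstacle — is the rescaling $\hat C_k^\top = \theta_{i(k)}^{-1} C_k^\top$: it is valid precisely because each row of $C$ is supported on a single coordinate, so that a row of $C\Theta^{-1}$ is a scalar multiple of the corresponding row of $C$ rather than a genuine reweighting across coordinates. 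If $C$ had rows with two or more nonzero entries (which the $BDC$ construction excludes), the time constants could no longer be absorbed into $D$ and the reduction would break down.
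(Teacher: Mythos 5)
Your proof is correct and takes essentially the same route as the paper: the paper likewise rescales the state by $\Theta$ (there $y=\Theta x$, turning \eqref{time_const} into $\dot y = Sg(\Theta^{-1}y)+g_0$) and notes that $B$ and $C$ are unchanged while each partial derivative is multiplied by the positive constant $\theta_j^{-1}$, which is exactly your absorption $\tilde D_k(t)=D_k(t)/\theta_{i(k)}$ of the time constants into the positive diagonal uncertainty. Your key observation that each row of $C$ has a single nonzero entry, so that $C_k^\top\Theta^{-1}=\theta_{i(k)}^{-1}C_k^\top$, is just the differential-inclusion-level restatement of the paper's remark that the scaling $\partial g_i/\partial y_j=\theta_j^{-1}\,\partial g_i/\partial x_j$ does not alter signs.
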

\begin{proof}
Scaling the state variable as $y = \Theta x$ turns equation \eqref{time_const} into $\dot y  = S g(\Theta^{-1} y) + g_0$. The proof follows immediately by noticing that matrices $B$ and $C$ are the same regardless of $\Theta$, while the derivatives are scaled as $\partial g_i/\partial y_j = \Theta_j^{-1} \partial g_i/\partial x_j$, which does not alter their sign.
\end{proof}

\subsection{Stopping criteria for Procedure \ref{proc}}\label{subsec:stop}
The procedure may fail to converge; in this case, the system does not admit any structural PLF. A possible stopping criterion, proposed in \cite{BlanchiniGiordano2014}, is to interrupt the procedure when either the size of the region $\mathcal{Y}^{k}$ reaches a bound $\mu$, or an assigned maximum number of steps, $n_{max}$, is reached.

We discuss here other possible criteria. The procedure will never converge
if, for some $k$, $\mathcal{Y}^{k}$ includes the original region $\mathcal{Y}^{0}$ in its interior
\cite{BlaMia96,BlaMia15}.  The inclusion of the initial polytope, with vertex matrix $[-I~I]$, should be checked at each step; this can be done as follows.
\begin{proposition}
If the polytope  $\mathcal{Y}=\mbox{conv}\{[-Y,Y]\}$ has a non-empty interior, then it includes $\mathcal{Y}^{0}=\mbox{conv} \{[-I,I]\}$
in its interior if and only if 
 \begin{equation} \label{lpe}
\nu_i = \min \{\| p_i \|_1:  Yp_i  = e_i \} < 1,~~~i=1,2,\dots,n,
  \end{equation}
where $\|p\|_1 = \sum |p_i|$ is the $1$-norm and $e_i$ is the $i$th vector of the canonical basis.
\end{proposition}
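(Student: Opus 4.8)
The plan is to recognise the quantities $\nu_i$ as the values, at the canonical vectors, of the gauge (Minkowski functional) of the polytope $\mathcal{Y}$, and then to reduce the inclusion to a check on the vertices of $\mathcal{Y}^0$. First I would observe that, since the columns of $[-Y,Y]$ come in antipodal pairs, $\mathcal{Y}$ is centrally symmetric and coincides with the linear image of the unit $1$-norm ball, $\mathcal{Y}=\{Yp:\|p\|_1\le 1\}$. Its gauge function is therefore
$$
V(x)=\inf\{t>0: x/t\in\mathcal{Y}\}=\min\{\|p\|_1: Yp=x\},
$$
which is precisely the PLF in the form \eqref{polyfuncX} with $X=Y$. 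The hypothesis that $\mathcal{Y}$ has non-empty interior is exactly what is needed here: it forces the columns of $Y$ to span $\mathbb{R}^n$ (so the minimisation is feasible and $V$ is finite everywhere), and it places the origin in the interior of the bounded symmetric set $\mathcal{Y}$, so that $V$ is a genuine norm. With this identification one has $\nu_i=V(e_i)$, together with the standard facts $\mathcal{Y}=\{x:V(x)\le 1\}$, $\mathrm{int}\,\mathcal{Y}=\{x:V(x)<1\}$.

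Next I would recall that $\mathcal{Y}^0=\mbox{conv}\{[-I,I]\}$ is exactly the unit $1$-norm ball, i.e. the polytope whose vertices are $\pm e_1,\dots,\pm e_n$. Since $V$ is convex and, by symmetry of the norm, $V(-e_i)=V(e_i)=\nu_i$, the maximum of $V$ over the polytope $\mathcal{Y}^0$ is attained at a vertex,
$$
\max_{x\in\mathcal{Y}^0}V(x)=\max_{1\le i\le n}\nu_i .
$$
The inclusion $\mathcal{Y}^0\subset\mathrm{int}\,\mathcal{Y}$ is equivalent to $V(x)<1$ for every $x\in\mathcal{Y}^0$, hence to $\max_{x\in\mathcal{Y}^0}V(x)<1$, i.e. to $\nu_i<1$ for all $i$. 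For the \emph{if} direction, each $x\in\mathcal{Y}^0$ is a convex combination $x=\sum_j\lambda_j v_j$ of vertices $v_j\in\{\pm e_i\}$, so $V(x)\le\sum_j\lambda_j V(v_j)\le\max_i\nu_i<1$, the strictness surviving because there are only finitely many vertices; for the \emph{only if} direction one simply notes that each vertex $e_i\in\mathcal{Y}^0\subset\mathrm{int}\,\mathcal{Y}$ forces $\nu_i=V(e_i)<1$.

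I expect the only delicate point to be the bookkeeping that establishes $V$ as a bona fide norm rather than a mere sublinear gauge: this is where the non-empty-interior hypothesis is genuinely used, to guarantee both feasibility of the defining minimisation (columns of $Y$ spanning $\mathbb{R}^n$) and positive definiteness (boundedness of $\mathcal{Y}$ together with $0\in\mathrm{int}\,\mathcal{Y}$). Once $V$ is known to be a continuous norm whose closed unit ball is $\mathcal{Y}$, the interior/boundary characterisation and the vertex-maximisation of the convex function $V$ over $\mathcal{Y}^0$ are routine, and the stated equivalence follows.
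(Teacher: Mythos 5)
Your proof is correct and takes essentially the same route as the paper's: identify $\nu_i = V_Y(e_i)$ as the value of the polyhedral norm (gauge) whose unit ball is $\mathcal{Y}$, then reduce the inclusion $\mathcal{Y}^0 \subset \mathrm{int}\,\mathcal{Y}$ to a check on the vertices $\pm e_i$. You merely spell out the gauge identification and the convexity/vertex-maximisation steps that the paper leaves implicit.
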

\begin{proof}
From expression \eqref{polyfuncX} we see that 
$\nu_i = V_Y(e_i)$, where $V_Y$ is the polyhedral norm  with unit ball
$\mathcal{Y}$. Vector $e_i$ (along with its opposite $-e_i$) is in the interior of  $\mathcal{Y}$
iff $\nu_i=V_Y(e_i) < 1$. Moreover, $\mathcal{Y}^0= \mbox{conv} \{[-I,I]\}$
is in the interior of  $\mathcal{Y}$ iff it vertices, i.e. $\pm e_i$,
are in the interior.
\end{proof}

Note that problem \eqref{lpe} can be solved via linear programming and allows to efficiently stop the procedure at an early stage when the system does not admit a structural PLF.

Another stopping criterion relies on eventually periodic matrices.
A square matrix $M$ is said \emph{eventually periodic} if there 
exist a non-negative integer $m$ and a positive integer $p$ such that 
\begin{equation}
M^m = M^{m + kp} \qquad ~~\mbox{for all integer}~ k \geq 0.
\end{equation}
\begin{proposition}\label{roots}
If the square matrix $M$ is eventually periodic, then its eigenvalues are either zero or roots of the unity.
\end{proposition}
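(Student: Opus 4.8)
The plan is to reduce eventual periodicity to a single polynomial identity satisfied by $M$, and then read off the eigenvalue constraint from that polynomial. First I would observe that the hypothesis $M^m = M^{m+kp}$ for all $k \geq 0$ is already equivalent to its single instance at $k=1$, namely $M^m = M^{m+p}$ (the full family follows by repeatedly multiplying this identity by $M^p$). Writing $M^{m+p} = M^m M^p$ and rearranging yields $M^m(M^p - I) = 0$, so $M$ is annihilated by the polynomial $q(x) = x^m(x^p - 1)$.

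The second step is to invoke the standard fact that every eigenvalue of $M$ is a root of any polynomial annihilating $M$. Concretely, if $\lambda$ is an eigenvalue with eigenvector $v \neq 0$, then $M^m v = \lambda^m v$ and $M^{m+p} v = \lambda^{m+p} v$; the identity $M^m = M^{m+p}$ forces $\lambda^m v = \lambda^{m+p} v$, and since $v \neq 0$ we obtain the scalar equation $\lambda^m = \lambda^{m+p}$, i.e. $\lambda^m(\lambda^p - 1) = 0$. Hence either $\lambda = 0$ or $\lambda^p = 1$, which is exactly the assertion that each eigenvalue is zero or a $p$-th root of unity.

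Equivalently, and perhaps more cleanly, the second step can be phrased through the minimal polynomial: since $q(M) = 0$, the minimal polynomial of $M$ divides $q(x) = x^m(x^p - 1)$, whose roots are $0$ and the $p$-th roots of unity, and the eigenvalues of $M$ are precisely the roots of its minimal polynomial. I do not expect any genuine obstacle here, since both the reduction and the root extraction are routine; the only points needing a little care are the (trivial) passage from the all-$k$ hypothesis to the single identity $M^m = M^{m+p}$, and the boundary case $m = 0$, where $q(x) = x^p - 1$ and every eigenvalue is forced to be a genuine $p$-th root of unity with none equal to zero.
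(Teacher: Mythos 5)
Your proof is correct and takes essentially the same route as the paper: both arguments apply the power identity to an eigenpair $(\lambda,v)$ and read off the scalar equation $\lambda^m(\lambda^p-1)=0$ (the paper keeps a general $k$, obtaining $\lambda^m(1-\lambda^{kp})v=0$, while you specialise to $k=1$, which suffices). Your minimal-polynomial restatement and the $m=0$ boundary remark are harmless repackagings of the same idea.
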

\begin{proof}
Take the eigenpair $(\lambda,v)$, $Mv = \lambda v$, $v\neq 0$. If $M$ is eventually periodic, $$(M^m - M^{m + kp})v =  \lambda^m (1 - \lambda^{k p}) v = 0.$$
Then $\lambda^m (1 - \lambda^{k p})=0$, i.e., $\lambda$ must be either $0$ 
or a root of the unity.
\end{proof}
\begin{proposition} \label{evper}
The set $\Pi(\mathcal{F})$ is a finite set only if each matrix in $\Pi(\mathcal{F})$ is eventually periodic.
\end{proposition}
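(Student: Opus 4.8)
The plan is to exploit the semigroup structure of $\Pi(\mathcal{F})$ together with a pigeonhole argument applied to the powers of a single matrix. First I would fix an arbitrary $M \in \Pi(\mathcal{F})$ and observe that, since $\Pi(\mathcal{F})$ is closed under multiplication, every power $M^k$ (for $k \geq 1$) again lies in $\Pi(\mathcal{F})$. This is the only place where the semigroup property stated after \eqref{eq:productset} is invoked, but it is essential: it guarantees that the entire orbit of $M$ under iteration stays inside the finite set.

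Next, assuming $\Pi(\mathcal{F})$ is finite, the infinite sequence $M, M^2, M^3, \dots$ takes values in a finite set, so by the pigeonhole principle two of its terms must coincide: there exist integers $1 \leq i < j$ with $M^i = M^j$. Setting $m = i$ and $p = j - i > 0$ yields the relation $M^m = M^{m+p}$, which is the seed of periodicity and already fixes the candidate pre-period $m$ and period $p$.

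Finally, I would promote this single coincidence to the full eventual-periodicity statement by induction on $k$. Multiplying $M^m = M^{m+p}$ on the right by $M^p$ gives $M^{m+p} = M^{m+2p}$, and combining with $M^m = M^{m+p}$ gives $M^m = M^{m+2p}$; iterating this step, $M^m = M^{m+kp}$ for every integer $k \geq 0$, which is exactly the definition of eventual periodicity recalled before Proposition \ref{roots}.

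I expect no serious obstacle here: the argument is essentially the standard fact that any element of a finite semigroup is eventually periodic under its own iteration. The only points requiring a little care are to invoke closure under multiplication so that the powers genuinely remain in $\Pi(\mathcal{F})$, and to verify that the chosen exponents $m$ and $p$ satisfy the definition \emph{for all} $k \geq 0$ rather than for a single value, which the induction step settles cleanly.
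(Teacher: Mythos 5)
Your proof is correct and is essentially the paper's own argument: the paper proves the contrapositive (a matrix that is not eventually periodic has infinitely many distinct powers, all of which lie in $\Pi(\mathcal{F})$ by the semigroup property, so the set is infinite), while you run the same pigeonhole-on-powers reasoning in the direct direction and spell out the induction step. The two are logically equivalent presentations of the same idea, yours merely more detailed.
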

\begin{proof}
By contradiction, if a matrix in $\Pi(\mathcal{F})$ is not eventually periodic, then its powers form an infinite sequence of different matrices, hence the set $\Pi(\mathcal{F})$ is infinite.
\end{proof}

The previous condition is not sufficient: even if all matrices are eventually periodic,  there can be an infinite sequence of products among them. Combining Propositions \ref{roots}  and  \ref{evper} yields the following corollary.
\begin{corollary}
\label{Cor:Corollary}
If $\Pi(\mathcal{F})$ includes a matrix whose eigenvalues are not either zero or roots of the unity, then it has infinite cardinality. 
\end{corollary}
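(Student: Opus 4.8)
The plan is to prove the statement purely by composing the two preceding propositions through their contrapositives; no new machinery is needed, since all the analytic content is already packaged in Propositions \ref{roots} and \ref{evper}. First I would fix the hypothesis: suppose $\Pi(\mathcal{F})$ contains a matrix $M$ possessing an eigenvalue $\lambda$ that is neither $0$ nor a root of unity.

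The first key step is to contrapose Proposition \ref{roots}. That proposition asserts that eventual periodicity forces every eigenvalue into $\{0\}\cup\{\text{roots of unity}\}$; equivalently, a matrix with an eigenvalue outside this set cannot be eventually periodic. Applying this to our $M$ and its eigenvalue $\lambda$, I conclude that $M$ is not eventually periodic.

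The second key step is to contrapose Proposition \ref{evper}. That proposition states that finiteness of $\Pi(\mathcal{F})$ implies every element of $\Pi(\mathcal{F})$ is eventually periodic; equivalently, if some element of $\Pi(\mathcal{F})$ fails to be eventually periodic, then $\Pi(\mathcal{F})$ cannot be finite. Since $M \in \Pi(\mathcal{F})$ was just shown not to be eventually periodic, it follows that $\Pi(\mathcal{F})$ has infinite cardinality, which is exactly the claim.

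I do not anticipate a genuine obstacle here: the corollary is a direct logical consequence of the chain ``eigenvalue outside $\{0\}\cup\{\text{roots of unity}\}$ $\Rightarrow$ not eventually periodic $\Rightarrow$ $\Pi(\mathcal{F})$ infinite.'' The only point requiring minor care is the logical direction, since both source propositions are stated as one-way implications (``only if'' and ``if \dots then''), so the argument must invoke their contrapositives rather than the statements as written; once that is made explicit, the proof is immediate.
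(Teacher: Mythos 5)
Your proof is correct and matches the paper's own argument: the paper derives Corollary \ref{Cor:Corollary} precisely by combining Propositions \ref{roots} and \ref{evper}, which is the chain of contrapositives you spell out. Making the contrapositive direction explicit is a fine clarification, but the substance is identical.
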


\begin{remark}
To check the condition, there is no need to compute \emph{all} the products in $\Pi(\mathcal{F})$. In fact,
\begin{itemize}
\item if $(I+B_iC_i^\top) \in \mathcal{F}$, then $(I+B_iC_i^\top)^n = (I+B_iC_i^\top)$ for any positive integer $n$. 
Indeed, since $C_i^\top B_i=-1$,  
$$
(I+B_iC_i^\top)^2 = I+ 2B_iC_i^\top + B_i(C_i^\top B_i)C_i^\top =I+B_iC_i^\top;
$$
\item if $\Phi_1=I+B_1C_1^\top$ and $\Phi_2=I+B_2C_2^\top$ with $C_1 = C_2$, then $\Phi_1\Phi_2=\Phi_2$,
as it can be seen in a similar way.
\end{itemize}
\end{remark}

Based on Corollary \ref{Cor:Corollary}, an alternative stopping criterion is achieved by computing the sequence of products  $\Pi(\mathcal{F})$ of increasing order
and stopping whenever one of them has an eigenvalue that is neither zero nor a root of the unity. This produces, in principle, an exponentially growing list of matrices. Yet,
extensive numerical experiments have shown that, in most cases, the stopping condition is quickly reached.

\subsection{Turning nodes into black holes}\label{subsec:blackholes}
We introduce a new type of node, called \emph{black hole},
in which any incoming token (either positive or negative) is cleared, 
so that the black hole contains zero tokens throughout the system evolution.
If a node is replaced by a black hole, then the GPN associated with the reaction network is transformed and behaves differently.

\begin{example}\label{ex:Vivaldi}
Consider the reaction network in Fig. \ref{fig:V}, where $\emptyset \react{} X_1$, $\emptyset \react{} X_2$, $X_1+X_2 \react{} X_3+X_4$, $X_4 \react{} X_2$, $X_1+X_3 \react{} \emptyset$. The GPN associated with this network is unbounded, i.e., no structural PLF exists for the system.
If we turn $X_2$ into a black hole, then we virtually have $X_2 := \emptyset$ and the transformed reaction network becomes: $\emptyset \react{} X_1$, $X_1  \react{} X_3+X_4$, $X_4 \react{} \emptyset$, $X_1+X_3 \react{} \emptyset$.
The GPN associated with the transformed network is bounded.
Boundedness is achieved also if node $X_4$ is turned into a black hole, instead of $X_2$. Conversely, turning either node $X_1$ or node $X_3$ into a black hole does not yield boundedness.
\end{example}

\begin{figure}[tb]
\centering
\includegraphics[width=5cm]{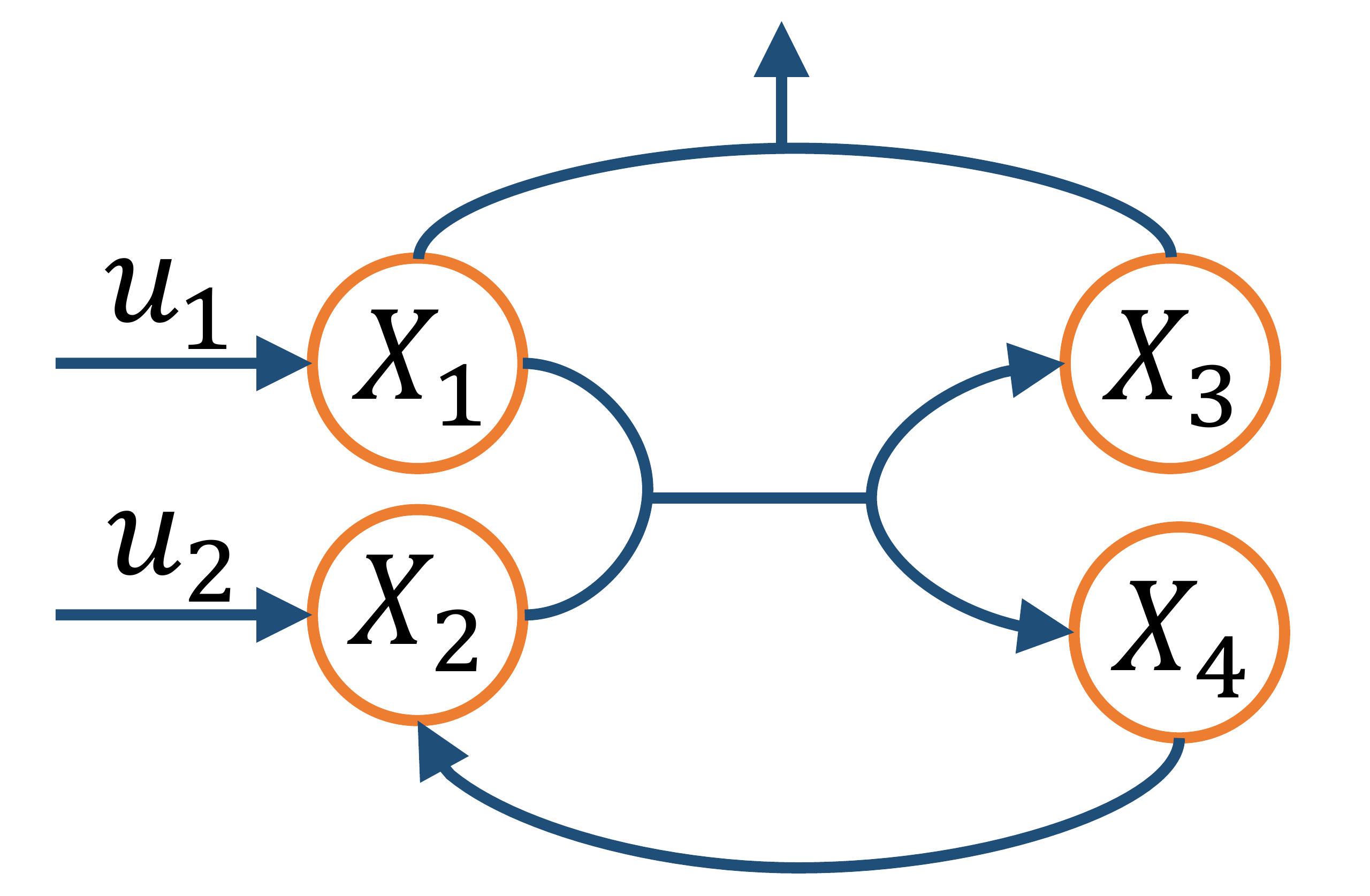}
\caption{Graph of the chemical reaction network corresponding to the system in Example \ref{ex:Vivaldi}.}
\label{fig:V}
\end{figure}

As we show in the next section, replacing a node with a black hole can be regarded as applying a strong feedback to that node, thus enforcing a pinning control.

\section{Pinning Control}\label{sec:pinningcontrol}

Pinning control strategies control just some of the state variables, each by means of a (strong) local feedback, so as to stabilise the whole network.
Without loss of generality, we split the state vector as $z = [z_1^\top \,\, z_2^\top]^\top$ and we assume that a feedback control action with $\gamma>0$ is applied to $z_1$, the first $p$ entries of $z$:
$$
\dot z_1 = S_1(g(x)-g(\bar x)) - \gamma z_1,
$$
where $S_1$ contains the first $p$ rows of $S$. The system \eqref{bdc} can then be split as
\begin{equation}\label{pinned}
\begin{bmatrix}\dot z_1  \\\dot z_2  
\end{bmatrix}
=  \underbrace{\begin{bmatrix}
B_1D(z )C_1 -\gamma I& B_1D(z )C_2\\B_2D(z )C_1 & B_2D(z )C_2
\end{bmatrix}}_{A(D(z),\gamma)}
\begin{bmatrix}
z_1   \\z_2  
\end{bmatrix},
\end{equation}
 where  $B_1$ and $B_2$ contain  the first $p$ and the last $n-p$ rows of $B$,
respectively, while $C_1$ and $C_2$ contain the first $p$ and the last $n-p$ columns of $C$. 

We have a first main result.
\begin{theorem}\label{theorem:pinn}
Under Assumptions \ref{as_eq}-\ref{largebound}, the following conditions are equivalent:
\begin{itemize}
\item[(i)]
The $(n-p)$-dimensional subsystem 
\begin{equation}\label{non-pinned}
\dot z_2 (t)=  B_2 D(t) C_2 z_2(t)
\end{equation}
admits a structural PLF $U(z_2)$, in the strong sense of Definition \ref{def:Lyap}.

\item [(ii)] There exists a polyhedral function $V(z)$ and $\bar \gamma >0$ such that $V(z)$ is a structural strong LF
for  system \eqref{pinned} if $\gamma \geq \bar \gamma$ .
\end{itemize} 
\end{theorem}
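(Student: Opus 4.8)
The plan is to establish the two implications separately, relying on the following reading of \eqref{pinned}: pinning the block $z_1$ adds to each pinned species a degradation term $-\gamma z_1$, which as $\gamma\to\infty$ turns the pinned nodes into the black holes of Section \ref{subsec:blackholes}, so that \eqref{non-pinned} is precisely the reduced dynamics obtained after their removal. Both candidate functions will be polyhedral, hence the only thing to verify is the sign of the generalised Lyapunov derivative $D^+$ of Definition \ref{def:Lyap}. Throughout I would use that, for a positively homogeneous convex $V$ and any subgradient $\xi\in\partial V(z)$, Euler's identity $\xi^\top z=V(z)$ holds, that the subgradients of a polyhedral norm are uniformly bounded, and that $\|z_2\|$ and $U(z_2)$ are equivalent.

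For (i)$\Rightarrow$(ii) I would exhibit the composite polyhedral function $V(z)=U(z_2)+\kappa\,\|z_1\|$, with $\|\cdot\|$ any polyhedral norm on the pinned block and $\kappa>0$ to be fixed, and bound its derivative along \eqref{pinned}. Since $U$ is a strong PLF for \eqref{non-pinned}, differentiating it along the full system adds only the cross term $\xi_2^\top B_2 D C_1 z_1$, so $D^+U\le -\beta\,U(z_2)+c_1\|z_1\|$ uniformly in $D$ by Assumption \ref{largebound}. The damping gives $D^+\|z_1\|\le(-\gamma+c_2)\|z_1\|+c_3\|z_2\|\le(-\gamma+c_2)\|z_1\|+c_3c_4\,U(z_2)$. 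Collecting the two estimates,
\begin{equation*}
D^+V\le(-\beta+\kappa c_3c_4)\,U(z_2)+\big(c_1+\kappa(c_2-\gamma)\big)\,\|z_1\|,
\end{equation*}
so that choosing first $\kappa\le\beta/(2c_3c_4)$ (making the $U$-coefficient $\le-\beta/2$) and then any $\gamma\ge\bar\gamma:=c_2+1+c_1/\kappa$ (making the $\|z_1\|$-coefficient $\le-\kappa$) yields $D^+V\le-\eta\,V$ with $\eta=\min\{\beta/2,1\}$. Hence $V$ is a structural strong LF for \eqref{pinned} whenever $\gamma\ge\bar\gamma$, which is (ii).

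For (ii)$\Rightarrow$(i) I would argue by contradiction on stability and then invoke the equivalence ``stability $\Leftrightarrow$ existence of a PLF'' recalled before Theorem \ref{th_DES}. If \eqref{non-pinned} did not admit a strong PLF, then, being a conic inclusion with $C_i^\top B_i=-1$, it would either possess an unbounded trajectory under some switching $D(t)$, or be only marginally stable with $B_2DC_2$ structurally singular for some admissible $D$. In the first case I would feed the same $D(t)$ into \eqref{pinned}: since the slow manifold is $z_1=(\gamma I-B_1DC_1)^{-1}B_1DC_2z_2=O(1/\gamma)$, the induced $z_2$-motion of the full system is an $O(1/\gamma)$ perturbation of \eqref{non-pinned}, so the unbounded trajectory persists and contradicts the boundedness implied by (ii). In the second case I would use the Schur complement $\det[-A(\hat D,\gamma)]=\det[\gamma I-B_1\hat DC_1]\,\det[-B_2\hat DC_2+O(1/\gamma)]$: the first factor is positive for large $\gamma$, so the sign of $\det[-A]$ on each vertex $\hat D$ of \eqref{largeboundeq} agrees with that of $\det[-B_2\hat DC_2]$, and thus $A(D,\gamma)$ is structurally non-singular for large $\gamma$ if and only if $B_2DC_2$ is. By Theorem \ref{nonsing} a weak PLF of \eqref{pinned} is strong only under such non-singularity, again contradicting (ii); this is how the strength of the PLFs is matched on the two sides.

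The step I expect to be the genuine obstacle is the first case of (ii)$\Rightarrow$(i): making rigorous, uniformly in $D(\cdot)$ and in the initial state, the claim that a non-decaying trajectory of the reduced inclusion \eqref{non-pinned} survives as a non-decaying trajectory of the singularly perturbed system \eqref{pinned} for all large $\gamma$. This is a persistence-of-instability statement for linear differential inclusions, which I would handle either through a Tikhonov-type boundary-layer estimate or, equivalently, by showing that the joint spectral radius of the full family converges to that of the reduced family as $\gamma\to\infty$, so that $\sigma(\cdot)>1$ for the reduced family forces $\sigma(\cdot)>1$ for the full one. A cleaner but less self-contained alternative is the contrapositive via Theorem \ref{th_DES}: an unbounded GPN for the black-hole (reduced) network lifts to an unbounded GPN for the full network by using only the transitions that do not touch the pinned nodes.
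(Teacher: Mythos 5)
Your (i)$\Rightarrow$(ii) direction is essentially correct and takes a genuinely different route from the paper. The paper argues algebraically: it writes $A(D,\gamma)$ as a polytopic family, takes the vertex matrix $X_2$ of the PLF for \eqref{non-pinned} (so that $A^{(k)}_{22}X_2=X_2P^{(k)}_{22}$ with $P^{(k)}_{22}$ strictly column diagonally dominant), builds the block vertex matrix $\hat X=\mathrm{diag}(\rho I, X_2)$, and shows $\hat A^{(k)}\hat X=\hat X\hat P^{(k)}$ with $\hat P^{(k)}$ diagonally dominant once $\rho$ and then $\gamma$ are large. Your composite function $V(z)=U(z_2)+\kappa\|z_1\|$ with the small-gain ordering (first $\kappa$, then $\bar\gamma$) reaches the same conclusion by a more standard ISS-style derivative estimate; both arguments use Assumption \ref{largebound} in the same way, and your $V$ is indeed polyhedral, so this half stands.

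The gap is in (ii)$\Rightarrow$(i), and none of your three proposed routes closes it. (a) The persistence-of-instability claim for the singularly perturbed inclusion is the crux and is left unproven, as you acknowledge. (b) The Schur-complement argument fails in precisely the direction you need: if $\det[-B_2\hat DC_2]=0$ at some admissible $\hat D$ while the determinant is nonnegative over the whole box \eqref{largeboundeq}, the $O(1/\gamma)$ correction can lift the determinant of the full matrix to be strictly positive everywhere, i.e.\ singularity is not robust to perturbation, so structural singularity of $B_2DC_2$ need not transfer to $A(D,\gamma)$ and no contradiction with (ii) is obtained. (c) The ``cleaner alternative'' via Theorem \ref{th_DES} aims at the wrong object: lifting an unbounded black-hole GPN evolution to the \emph{unpinned} network's GPN only shows that the unpinned system has no PLF, which is perfectly compatible with (ii) --- all of the paper's pinning examples have unbounded original GPNs yet are stabilised by pinning. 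The paper avoids all of this by arguing in the stable direction instead of propagating instability upward: the unit ball ${\cal S}$ of the strong LF from (ii) is invariant for every $\gamma\ge\bar\gamma$, and Lemma \ref{sublemma} shows that the slice ${\cal S}_2=\{z\in{\cal S}\colon z_1=0\}$ is a C-set that is invariant for the frozen-$D$ reduced dynamics; an $\epsilon I$ perturbation (licit because the LF is strong) turns the resulting marginal stability of the $z_2$-inclusion into exponential stability, and the classical Brayton--Tong/Molchanov--Pyatnitskii result then yields the strong PLF $U(z_2)$. That slicing idea is what your proposal is missing; note also that it never uses $C_i^\top B_i=-1$, whereas your dichotomy (unbounded trajectory versus marginal stability with a weak PLF) leans on the unitary-network equivalence between stability and PLF existence, so even a completed version of your argument would be less general than the paper's.
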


\begin{remark}
The structural result needs Assumption \ref{largebound}. 
Consider the linear differential inclusion with matrix
$$\begin{bmatrix}
-\gamma+ D_1(t) & 1\\
1 & -D_2(t)
\end{bmatrix}.$$
No matter how large $\gamma>0$ is, the system becomes unstable 
if $D_1(t)>0$ diverges and/or $D_2(t)>0$ converges to $0$;
conversely, under Assumption \ref{largebound}, a stabilising $\gamma$ always exists, which depends on the bounds in \eqref{largeboundeq}.
\end{remark}

{\bf Proof of Theorem \ref{theorem:pinn}}

{\it (i) $\Rightarrow$ (ii).} Due to the compactness Assumption  \ref{largebound},
the matrix family in system \eqref{pinned} is polytopic and we can write
$$
A(D,\gamma)=\begin{bmatrix}
B_1DC_1 & B_1DC_2\\B_2DC_1 & B_2DC_2
\end{bmatrix}
=
\sum_{k=1}^M\lambda_{ k} 
\begin{bmatrix}
A^{(k)}_{11}  &  A^{(k)}_{12}\\
A^{(k)}_{21}  &  A^{(k)}_{22} 
\end{bmatrix},$$
with $\sum_{k=1}^M\lambda_{k}=1$, $\lambda_{k}>0$, because each of the four matrix blocks in $A(D,\gamma)$ can be expressed as a polytopic matrix 
$\sum_{k=1}^M\lambda_{k} A^{(k)}_{ij}$ \cite{BlaMia15}.

By assumption, system \eqref{non-pinned} admits a structural PLF, hence $[A^{(k)}_{22}] X_{ 2 } = X_{ 2 } P_{22}^{(k)}$,
where $P_{22}^{(k)}$ is strictly column diagonally dominant and $X_2$ has full row rank
\cite{BlaMia15,BraTon80,Molchanov_Pyatnitskiy_86}. 
Then we can take
$$
\hat X = 
\begin{bmatrix}
\rho I & 0\\
0 & X_2
\end{bmatrix},
$$
where $\rho>0$ is a parameter to be selected, and write the $\gamma$-parametrised equation
\begin{eqnarray}
\underbrace{\left [\begin{smallmatrix}
-\gamma I +A^{(k)}_{11} & A^{(k)}_{12}\\
A^{(k)}_{21} &  A^{(k)}_{22}
\end{smallmatrix}\right ]}_{\hat A^{(k)}}
\hat X = \hat X
\underbrace{
\left [\begin{smallmatrix}
-\gamma I   + A^{(k)}_{11} & A_{12}^{(k)}X_2/\rho\\
M_{21}^{(k)}\rho & P_{22}^{(k)}
\end{smallmatrix}\right ]}_{\hat P^{(k)}} \label{bigeq}
\end{eqnarray}
where $M_{21}^{(k)}$ are matrices that satisfy $X_2 M_{21}^{(k)} =A_{21}^{(k)}$, and always exist because $X_2$ has full row rank.

Since $ P_{22}^{(k)}$ is strictly diagonally dominant, we can take $\rho>0$
large enough, so that the diagonal dominance is preserved in the last $n-p$ columns of $\hat P^{(k)}$.
Now, for any choice of $\rho$, there exists a $\bar \gamma$ large enough 
such that, for  $\gamma \geq \bar \gamma$,  diagonal dominance of the first $p$ columns of $\hat P^{(k)}$
is ensured. The resulting equations $\hat A^{(k)} \hat X =\hat X \hat P^{(k)}$,
with $P^{(k)}$ diagonally dominant, ensure that $[-\hat X,\hat X]$ are the vertices of a PLF for system \eqref{pinned}.

{\it  (ii) $\Rightarrow$ (i).} A technical Lemma is required.
\begin{lemma} \label{sublemma}
Assume that the convex and compact set ${\cal S}$ including the origin
as an interior point (C-set) is positively invariant for the linear time-invariant system
\begin{equation}\label{systemmu}
\begin{bmatrix}
\dot z_1 \\
\dot z_2
\end{bmatrix} = \begin{bmatrix}
F_{11}-\mu I & F_{12}\\
F_{21} & F_{22}
\end{bmatrix}
\begin{bmatrix}
 z_1 \\
 z_2
\end{bmatrix}
\end{equation}
for all $\mu >0$. 
Then the intersection ${\cal S}_2 =  \{z\in {\cal S} \colon z_1=0\}$ is a C-set  in the subspace 
$[z_1^\top ~z_2^\top]^\top$ with
$z_1=0$ and is positively invariant for the subsystem $\dot z_2 = F_{22}z_2$.
\end{lemma}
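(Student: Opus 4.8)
The plan is to prove the two assertions separately: first the structural fact that $\mathcal{S}_2$ is a C-set in the subspace $\{z_1=0\}$, and then the substantive claim of positive invariance, which I would obtain through a singular-perturbation / limiting-trajectory argument that exploits the hypothesis that $\mathcal{S}$ is invariant \emph{for every} $\mu>0$. The C-set part is routine: $\mathcal{S}_2=\mathcal{S}\cap\{z_1=0\}$ is convex and compact, being the intersection of the convex compact set $\mathcal{S}$ with a linear subspace; and since $0$ is an interior point of $\mathcal{S}$ in $\mathbb{R}^n$, some ball $B(0,r)\subseteq\mathcal{S}$, so its slice $B(0,r)\cap\{z_1=0\}$ is a relative ball about $0$ contained in $\mathcal{S}_2$, making $0$ a relative interior point. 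Hence $\mathcal{S}_2$ is a C-set in the subspace identified with the $z_2$-coordinates.

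For invariance, I would fix an initial point $[0^\top\ (z_2^0)^\top]^\top\in\mathcal{S}_2$, let $z_2(t)$ solve the reduced system $\dot z_2=F_{22}z_2$ with $z_2(0)=z_2^0$, and show that $[0^\top\ z_2(t)^\top]^\top\in\mathcal{S}$ for all $t\ge0$. The device is to compare this reduced trajectory with the full system in the limit of strong contraction. For each $\mu>0$ let $z^\mu(t)=[z_1^\mu(t)^\top\ z_2^\mu(t)^\top]^\top$ denote the solution of \eqref{systemmu} from the same initial point; by hypothesis $z^\mu(t)\in\mathcal{S}$ for all $t\ge0$, and compactness of $\mathcal{S}$ furnishes a uniform bound $\|z^\mu(t)\|\le R$. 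The claim I would establish is that $z^\mu(t)\to[0^\top\ z_2(t)^\top]^\top$ as $\mu\to\infty$: then, since each $z^\mu(t)$ lies in the closed set $\mathcal{S}$, so does the limit, which also satisfies $z_1=0$ and therefore belongs to $\mathcal{S}_2$.

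The convergence itself is where the real work sits, and I would keep it elementary rather than invoking Tikhonov's theorem as a black box. Since $z_1^\mu(0)=0$, variation of constants gives $z_1^\mu(t)=\int_0^t e^{(F_{11}-\mu I)(t-s)}F_{12}z_2^\mu(s)\,ds$; using the crude bound $\|e^{(F_{11}-\mu I)\tau}\|\le e^{-(\mu-\|F_{11}\|)\tau}$ together with the a priori bound $R$, I get $\|z_1^\mu(t)\|\le \|F_{12}\|R/(\mu-\|F_{11}\|)$ for $\mu>\|F_{11}\|$, so $z_1^\mu\to0$ uniformly on $[0,\infty)$. Substituting this into $\dot z_2^\mu=F_{22}z_2^\mu+F_{21}z_1^\mu$ and comparing with $\dot z_2=F_{22}z_2$ via Gronwall's inequality then yields $z_2^\mu(t)\to z_2(t)$ for each fixed $t$, completing the argument.

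The hard part will be making this $\mu\to\infty$ limit rigorous, since it is in effect a singular-perturbation statement; but compactness of $\mathcal{S}$ removes the usual obstacles, as it supplies the uniform bound $R$ that controls the convolution estimate for $z_1^\mu$, and because the initial condition already lies on the slow manifold $\{z_1=0\}$ there is no boundary layer to account for. The only point demanding care is the non-normality of $F_{11}$, which I would sidestep with the exponential bound above rather than a sharper spectral estimate, at no cost to the conclusion.
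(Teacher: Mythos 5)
Your proof is correct, and it rests on the same singular-perturbation idea as the paper (compare the trajectory of the full system for large $\mu$ with the reduced trajectory starting from the same point on $\{z_1=0\}$), but the execution differs in two genuine ways. First, where you control $z_1^\mu$ by the explicit convolution estimate $\|z_1^\mu(t)\|\le \|F_{12}\|R/(\mu-\|F_{11}\|)$, obtained from $e^{(F_{11}-\mu I)\tau}=e^{-\mu\tau}e^{F_{11}\tau}$ and the a priori bound $R$ supplied by compactness of ${\cal S}$, the paper instead proves that the slab ${\cal S}_\nu=\{z\in{\cal S}\colon\|z_1\|\le\nu\}$ is positively invariant for $\mu\ge(\alpha+\beta)/\nu^2$, using the quadratic Lyapunov function $W(z_1)=\frac{1}{2}\|z_1\|^2$; both deliver the same ``$z_1$ stays small'' ingredient, but yours is quantitative, uniform on $[0,\infty)$, and avoids constructing an auxiliary invariant set. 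Second, your conclusion is direct: each $z^\mu(t)$ lies in the closed set ${\cal S}$, hence so does the limit $[0^\top~z_2(t)^\top]^\top$, which, having $z_1=0$, lies in ${\cal S}_2$. The paper instead argues by contradiction, supposing the reduced trajectory escapes at some time $T$, surrounding the escaped point with a ball ${\cal U}$ disjoint from ${\cal S}$, and showing the full trajectory (which must remain in ${\cal S}$) enters ${\cal U}$ for large $\mu$ -- the same comparison estimate you perform via variation of constants and Gronwall, wrapped in a contradiction that your closedness observation makes unnecessary. Net assessment: your route is shorter and logically cleaner; the paper's Lyapunov-based slab argument buys a form of robustness (it does not exploit the explicit matrix exponential of the $F_{11}-\mu I$ block, so it would survive a time-varying or uncertain $F_{11}$), but for the time-invariant statement being proved nothing is lost in your version.
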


The proof of  Lemma \ref{sublemma} is in the appendix.

By assumption, ${\cal S} = \{z \colon V(z) \leq 1\}$, the unit ball of $V(z)$,
is an invariant set for system \eqref{pinned}
for all $\gamma \geq \bar \gamma$. 
Let us perturb system \eqref{pinned} and write it as
\begin{equation}\label{pinnedeps}
\dot z 
=  [A(D,\gamma) + \epsilon I]z \doteq F_\epsilon(D,\gamma) z,
\end{equation}
with $\epsilon>0$ small enough to ensure that exponential stability is preserved:
if $D^+V(z) \leq -\beta V(z)$,  just take $0<\epsilon<\beta$.

For any fixed $\bar D$, if $\gamma \geq \bar \gamma$, ${\cal S}$ is invariant
for the linear time invariant system
with state matrix $F_\epsilon(\bar D,\gamma)$. 
Take $\mu =\gamma -\bar \gamma$ 
and apply Lemma \ref{sublemma}. The intersection ${\cal S}_2  = \{z \in {\cal S} \colon z_1=0\}$,
which is a polyhedral C-set in the $z_2$-space, 
is positively invariant for the system
$\dot z_2 = [A_{22}(\bar D,\gamma)+\epsilon I] z_2$. 
Since this claim is true for any choice of $D$, the C-set  ${\cal S}_2$
is robustly positively invariant for the differential inclusion
$\dot z_2 = [A_{22}(D(t),\gamma)+\epsilon I]z_2$, which is thus at least marginally stable. As a consequence,
$\dot z_2 = A_{22}(D,\gamma)z_2$ is exponentially stable, hence it admits a PLF
\cite{BlaMia15,BraTon80,Molchanov_Pyatnitskiy_86}.
\QED

Although the same local feedback parameter $\gamma$ is considered in Theorem \ref{theorem:pinn} for all pinned nodes, different parameters $\gamma_i$ could be adopted for the $p$ nodes, provided that, for all $i$, $\gamma_i \geq \bar \gamma$ is large enough to ensure diagonal dominance of the  first  $p$ columns in the last matrix in \eqref{bigeq}. Furthermore, the result easily extends to nonlinear feedback strategies $k_i(z_i) z_i$, provided that $k_i(z_i) \geq \bar \gamma$ for all $z_i$. 

We now need to face a technical issue.
Procedure \ref{proc} can be adopted to find a structural PLF for 
system \eqref{non-pinned}, and all the results in Section \ref{polybounded},
including the stopping criterion, remain valid.
Unfortunately, the procedure provides a \emph{weak} structural PLF $V(z_2)$, and not a \emph{strong} one as required by Theorem \ref{theorem:pinn}.
 To fix the problem we consider three facts.
\begin{itemize}
\item In view of Theorem \ref{nonsing}, if we find a weak PLF for the differential inclusion \eqref{non-pinned},  we can claim its robust asymptotic stability if (and only if) matrix $BDC$ is structurally non-singular.
\item A classical result \cite{BlaMia15,BraTon80,Molchanov_Pyatnitskiy_86}
ensures that, if  \eqref{non-pinned} with compact bounds \eqref{largeboundeq} (introduced exactly for this technical reason) is asymptotically stable, then
it is also exponentially stable and it admits a strong PLF $U(z_2)$.
\item Hence, if we find $V(z_2)$ and $BDC$ is structurally non-singular, we 
know that a strong PLF $U(z_2)$ exists. This allows us to apply Theorem \ref{theorem:pinn}; fortunately, we do not need to compute $U(z_2)$.
\end{itemize} 
\begin{corollary} \label{corro}
Assume that $V(z_2)$ is a weak structural PLF for
\eqref{non-pinned} and that $\det [B_2DC_2]\neq 0$ for all $D$ in \eqref{largeboundeq}.
Then, there exists $\bar \gamma$ such that, for all $\gamma> \bar \gamma$,
system \eqref{pinned} is structurally exponentially stable.
\end{corollary}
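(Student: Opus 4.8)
The plan is to assemble the three facts listed just before the statement into a single implication chain: I would apply each already-established result to the non-pinned subsystem \eqref{non-pinned}, and then transfer the conclusion to the full pinned system \eqref{pinned} through the equivalence of Theorem \ref{theorem:pinn}. No new machinery is needed; the corollary is essentially a concatenation.

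First I would apply Theorem \ref{nonsing} to the subsystem \eqref{non-pinned}. As observed in Section \ref{sec:pinningcontrol}, Procedure \ref{proc} and all the results of Section \ref{polybounded} remain valid for \eqref{non-pinned}; in particular the sign identity $C_i^\top B_i = -1$ is inherited by the columns retained after pinning, so the pair $(B_2, C_2)$ fits the framework. The hypotheses then give exactly what Theorem \ref{nonsing} requires: a weak structural PLF $V(z_2)$ together with $\det[B_2 D C_2] \neq 0$ for all $D$ in \eqref{largeboundeq}. Hence the theorem yields that \eqref{non-pinned} is structurally (robustly) asymptotically stable.

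Next I would upgrade asymptotic stability to the existence of a \emph{strong} certificate. Because Assumption \ref{largebound} enforces compactness of the admissible matrix family, the classical equivalence \cite{BlaMia15,BraTon80,Molchanov_Pyatnitskiy_86} guarantees that robust asymptotic stability of \eqref{non-pinned} is in fact exponential and is certified by a strong structural PLF $U(z_2)$ in the sense of Definition \ref{def:Lyap}. The key subtlety here, which the preceding discussion already stresses, is that I only need the \emph{existence} of $U(z_2)$, not its explicit computation: existence alone is precisely what validates condition (i) of Theorem \ref{theorem:pinn}.

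Finally I would invoke the implication (i)$\Rightarrow$(ii) of Theorem \ref{theorem:pinn}: from a strong structural PLF for \eqref{non-pinned} it produces a polyhedral function $V(z)$ and a threshold $\bar\gamma > 0$ such that $V(z)$ is a structural strong LF for the full system \eqref{pinned} whenever $\gamma \geq \bar\gamma$. Since a strong LF satisfies $D^+V(z) \leq -\beta V(z)$ with $\beta > 0$ uniformly over the admissible $D$, it immediately certifies structural exponential stability of \eqref{pinned} for all $\gamma > \bar\gamma$, which is the claim. I expect no genuine obstacle in the argument, only one bookkeeping point that must be checked carefully: that the structure of Section \ref{polybounded}, most notably $C_i^\top B_i = -1$, truly carries over to the reduced pair $(B_2, C_2)$, so that Theorem \ref{nonsing} is legitimately applicable to the subsystem. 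Once this is granted, the corollary follows by directly chaining the three cited results.
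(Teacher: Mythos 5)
Your proof is correct and is essentially the paper's own argument: the paper establishes Corollary \ref{corro} precisely by chaining the three facts you cite, namely Theorem \ref{nonsing} applied to \eqref{non-pinned}, the classical equivalence (under the compact bounds \eqref{largeboundeq}) of robust asymptotic stability with exponential stability and existence of a strong PLF $U(z_2)$, and finally implication (i)$\Rightarrow$(ii) of Theorem \ref{theorem:pinn}, noting that only the existence of $U(z_2)$ is needed, not its computation.

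One remark on the point you single out as the key bookkeeping check: it is neither needed nor, as stated, quite true. The identity $C_i^\top B_i=-1$ is not inherited verbatim by the reduced pair $(B_2,C_2)$: whenever the rate of reaction $i$ depends on a pinned variable, the corresponding column of $C_2$ is zero, so $C_{2,i}^\top B_{2,i}=0$ rather than $-1$ (such terms contribute nothing to $B_2DC_2$ and can simply be discarded, after which the surviving columns do satisfy the identity). More importantly, the check is unnecessary: as the paper remarks immediately after the corollary, all results of that section hold also for non-unitary networks, since neither the PLF/non-singularity characterisation of Theorem \ref{nonsing} (from \cite{BlanchiniGiordano2017}) nor the classical strong-PLF existence result \cite{BlaMia15,BraTon80,Molchanov_Pyatnitskiy_86} requires $C_i^\top B_i=-1$; that identity is only used for the Petri-net correspondence results, which the corollary does not invoke.
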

Structural non-singularity of $BDC$ can be checked as discussed in Remark \ref{rem:nons}.

\begin{remark}
All the results presented in this section so far hold also for non-unitary networks: we do not need to assume $C_k^\top B_k=-1$.
\end{remark}

\subsection{Lyapunov function for the free variables $z_2$}
If we find a PLF for the $z_2$-subsystem of dimension $n-p$, and we have structural non-singularity of $BDC$, then the stability of the overall system is ensured for large enough $\gamma$, according to Corollary \ref{corro}. How can we exploit the GPN to this aim?
\begin{theorem}\label{theorem:FSM}
If $C_i^\top B_i = -1$ $\forall i=1,\dots,q$, the following statements are equivalent:
\begin{itemize}
\item [i)] The $z_2$-subsystem \eqref{non-pinned} admits a weak structural PLF.
\item  [ii)] The evolution of the GPN where nodes $1, \dots, p$ have been turned into black holes is bounded for any integer initial marking.
\end{itemize}
\end{theorem}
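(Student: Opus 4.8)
The plan is to derive Theorem~\ref{theorem:FSM} from Theorem~\ref{th_DES} by recognising that the GPN with black holes at nodes $1,\dots,p$ is nothing but the GPN associated, through Proposition~\ref{prop}, with the reduced subsystem~\eqref{non-pinned}. First I would spell out what a black hole does at the discrete-event level: forcing $\tau_j \equiv 0$ for every $j \le p$ means that in each transition $T_h$ of the proof of Proposition~\ref{prop} (acting on the rate-determining reactant $x_j$, the co-reactant $x_w$ and the product $x_\ell$) any token that $T_h$ would deposit into or draw from a pinned node is instantaneously cleared, and any $T_h$ whose rate-determining variable is pinned ($j\le p$) collapses to the identity, since $\tau_j=0$. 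I would then check that, restricted to the surviving (non-pinned) nodes, this truncated action coincides exactly with the operator $\hat\Phi_h = I_{n-p} + (B_2)_h (C_2)_h^\top$ of the difference inclusion \eqref{discdiffinc_sw}--\eqref{family_M} built for \eqref{non-pinned}: both update each surviving coordinate $\ell>p$ by adding $(S_i)_\ell\,\tau_j$ and discard whatever falls on the pinned coordinates.

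The crucial structural point I would then isolate is that the reduced family $\hat{\mathcal{F}} = \{\hat\Phi_h\}$ contains no \emph{intermediate} operators. For each $h$, the rate-determining variable $x_j$ is either pinned or not. If $j\le p$, then $(C_2)_h = 0$ and $\hat\Phi_h = I$. If $j>p$, then the single nonzero entry of $(C_2)_h^\top$ sits at position $j-p$, where $(B_2)_h$ carries $S_{ji}=-1$ (by Assumption~\ref{incr}, $g_i$ depends on $x_j$ precisely when $S_{ji}=-1$), so that $(C_2)_h^\top (B_2)_h = -1$. Hence every $\hat\Phi_h$ is either the identity or satisfies the standing hypothesis $C^\top B=-1$ of Theorem~\ref{th_DES}, with no third possibility.

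I would then discard the identity operators, obtaining a family $\hat{\mathcal{F}}'$ on which $(C_2)_h^\top (B_2)_h=-1$ holds for every retained index. This removal is harmless on all three fronts involved: identities create no new states in any reachable set $\mathcal{R}(\tau_0)$, they generate no new vertices in Procedure~\ref{proc}, and any weak PLF $V$ for $\hat{\mathcal{F}}'$ remains a weak PLF for $\hat{\mathcal{F}}$ because $V(Iz)=V(z)$. Applying Theorem~\ref{th_DES} to $\hat{\mathcal{F}}'$, together with the equivalence between termination of Procedure~\ref{proc} and existence of a PLF recalled before Theorem~\ref{th_DES} (valid exactly because $C^\top B=-1$ on $\hat{\mathcal{F}}'$), yields the chain: the black-hole GPN is bounded for every integer marking $\iff$ Procedure~\ref{proc} for the subsystem stops in finite time $\iff$ system~\eqref{non-pinned} admits a weak structural PLF. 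This proves both implications at once.

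I expect the main obstacle to be the first step: the rigorous, node-by-node verification that the black-hole truncation of the full GPN agrees with the transitions generated by $\hat{\mathcal{F}}$. This requires a careful case analysis of where the reactant $x_j$, the co-reactant $x_w$ and the product $x_\ell$ of each reaction fall relative to the pinned/non-pinned split --- for instance a reaction $X_j+X_w \react{g_i} X_\ell$ with $j>p$ but $w\le p$ or $\ell\le p$, where the black hole nullifies part of the stoichiometric action --- and confirming that each case matches the corresponding row pattern of $\hat\Phi_h$. Once this identification is secured, the remainder is a mechanical application of Theorem~\ref{th_DES} and the identity-removal argument.
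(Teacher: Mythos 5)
Your proposal is correct, and it reaches the theorem by a route that differs from the paper's in one substantive technical point. The paper never applies Theorem \ref{th_DES} to the reduced family: it stays in the full $n$-dimensional space, represents the black-hole GPN by the padded operators $\Psi_h$ with zero upper-left block, notes that Procedure \ref{proc} run on these operators reproduces the subsystem iterates padded with $p$ zero rows, and then transfers the PLF between the padded inclusion \eqref{larga} and the subsystem \eqref{non-pinned} through the matrix equations \eqref{full}--\eqref{reduced}, i.e. $A_{22}^{(k)}X_2 = X_2 P_{22}^{(k)}$ with weakly diagonally dominant $P_{22}^{(k)}$ and a block-diagonal completion. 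You instead work entirely in the $(n-p)$-dimensional space and make Theorem \ref{th_DES} directly applicable there, via an observation the paper leaves implicit: each reduced operator $\hat\Phi_h = I + (B_2)_h(C_2)_h^\top$ is either the identity (rate-determining variable pinned, so $(C_2)_h = 0$) or satisfies $(C_2)_h^\top(B_2)_h = C_h^\top B_h = -1$ (rate-determining variable not pinned, so the surviving coordinate carries the entire inner product), and the identity operators can be discarded without changing reachable sets, procedure iterates, or weak PLFs. Your route buys a proof that bypasses the $AX=XP$/diagonal-dominance machinery and reuses Theorem \ref{th_DES} verbatim; its cost is that all the weight falls on the discrete-event identification (black holes truncate each transition $T_h$ exactly as $\hat\Phi_h$ prescribes, including the cases where the co-reactant or the product is pinned), which you rightly single out as the step requiring careful case analysis --- note, for fairness, that the paper asserts the analogous fact for the operators $\Psi_h$ without proof, so your plan meets the paper's own standard of rigor once that verification is written out.
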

\begin{proof}
We show that a (weak) PLF exists for \eqref{non-pinned} if and only if
a (weak) PLF exists for the differential inclusion  
\begin{equation}\label{larga}
\begin{bmatrix}
\dot z_1 \\
\dot z_2
\end{bmatrix}
=\begin{bmatrix}
0 & 0\\
0 &  B_2D(t)C_2
\end{bmatrix}
\begin{bmatrix}
z_1 \\
 z_2
\end{bmatrix}.
\end{equation}
Indeed, let $B_2DC_2 =\sum_{k=1}^q~D_k B_{2,k} C_{2,k}^\top$.
A weak PLF for \eqref{larga} exists if and only if the equation \cite{BlaMia15}
 \begin{equation} 
\label{full}
\begin{bmatrix}
0 & 0\\
0 & A_{22}^{(k)}
\end{bmatrix}
\begin{bmatrix}
X_1 \\
X_2
\end{bmatrix}
=
\begin{bmatrix}
X_1 \\
X_2
\end{bmatrix}
P ^{(k)}
 \end{equation} 
holds with $X=[X_1^\top X_2^\top]^\top$ full row rank and $P ^{(k)}$ weakly diagonally dominant. Hence, $X_2$ has full row rank and
\begin{equation} A_{22}^{(k)}X_2 = X_2P_{22}^{(k)} \label{reduced}
\end{equation} 
holds, which is equivalent to the existence of a weak PLF for the subsystem \eqref{non-pinned}.
Conversely, if \eqref{reduced} holds with $X_2$ full row rank and $P_{22}^{(k)}$ diagonally dominant,
then \eqref{full} holds with $X$ and $P^{(k)}$ as follows
$$
\begin{bmatrix}
0 & 0\\
0 & A_{22}^{(k)}
\end{bmatrix}
\begin{bmatrix}
I & 0 \\
0 & X_2
\end{bmatrix}
=\begin{bmatrix}
I & 0 \\
0 & X_2
\end{bmatrix}
\begin{bmatrix}
0 & 0\\
0 & P_{22}^{(k)}
\end{bmatrix}.
$$
To complete the proof, note that, after having turned nodes $1, \dots, p$ into black holes, the evolution of the GPN is represented by the integer operators
$$
\Psi(k)=
\begin{bmatrix}
0 & 0\\
0 &  I+B_{2,h} C_{2,h}^\top
\end{bmatrix}
$$
If we apply Procedure \ref{proc}, initialised with $X_0=[I,-I]$,
we generate the matrices $\mathbb{X}_{k}$ 
$$\mathbb{X}^{0}=
\left [
\begin{array}{rr|rr}
I & 0 & -I & 0\\
0 &  I& 0 & -I
\end{array}
\right ]~\mbox{and}~
\mathbb{X}_{k}=
\begin{bmatrix}
0\\
 \mathbb{X}_2^{k}
\end{bmatrix},~~k>0
$$
where the first $p$ rows are zero, while the remaining $n-p$ rows are exactly those we get by applying the procedure to subsystem \eqref{non-pinned}.
On the other hand, Procedure \ref{proc} applied to the original system with the pinned nodes converges if and only if a (weak) structural PLF exists for \eqref{non-pinned}, as shown in \cite{BlanchiniGiordano2014}.
\end{proof}
 
The conditions of Theorem \ref{theorem:FSM} are equivalent to the fact that Procedure \ref{proc} converges after having zeroed the rows of $B$ and the columns of $C$ corresponding to the pinned nodes.
From a computational standpoint, however, applying the procedure on the 
subsystem \eqref{non-pinned} is more convenient, because the system size is reduced and the stopping criteria discussed in Section \ref{subsec:stop} remain valid for the restricted subspace, while they are no longer valid for the original state space.

\begin{remark} \emph{(Strong convergence.)}  According to Theorem \ref{theorem:FSM}, the differential inclusion can be reduced as in  \eqref{larga}. If matrix $B_2DC_2$ is structurally non-singular, then pinning the first $p$ nodes makes the differential inclusion \emph{strongly convergent}, according to the definition in \cite{FBBCG20}.
\end{remark} 

It is worth stressing that a \emph{dual procedure} can be adopted as well.
We can derive a Lyapunov function (if it exists) defined in terms of planes, as in \eqref{polyfuncF}. As shown in \cite{BlanchiniGiordano2014,BlanchiniGiordanoCDC2015}, this is equivalent to
applying the Procedure \ref{proc} to the dual system
$\dot w(t) = C^\top D(t) B^\top w(t)$.
In view of duality properties \cite{BG2021,BlaMia15,Molchanov_Pyatnitskiy_86}, a PLF exists for the primal system if and only if it exists for its dual.

\subsection{Arc pinning: Regulating the reactions}
Node pinning means imposing a strong feedback to some nodes. By arc pinning, we mean that a strong feedback is imposed to some flows, hence
$$g_j(x) := g_j(x) - \gamma(g_j(x) - \bar r_j),
$$
so that, roughly speaking, the flow is forced to have a prescribed nominal value.

This \emph{dual} arc-pinning problem can be solved by writing the system in reaction coordinates, as in \cite{AlrAng13,AlrAng16,AlrAngSon19}, and then adopting the $EDF$-decomposition \cite{BlanchiniGiordanoCDC2015}.
If $\bar{x}$ is an equilibrium point of the system, under suitable conditions, we can define a transformation from concentration coordinates $(x_i)_{i=1}^\numconc$ to reaction coordinates $(r_j)_{j=1}^\numreac$ as $r(x(t)) = g(x(t)) - g(\bar{x})$. After this transformation, the system becomes
 
\begin{equation}\label{Eq:EDFpreDecomposition}
\dot{r}(t) = \bigg[ \frac{\partial g}{\partial x} \bigg]Sr(t).
\end{equation}
A procedure similar to the $BDC$-decomposition \cite{BlanchiniGiordanoCDC2015} transforms system \eqref{Eq:EDFpreDecomposition} into the linear differential inclusion
\begin{equation}\label{Eq:EDF_decom}
\dot{r}(t) = ED(t)Fr(t),
\end{equation}
where $D$ is a diagonal matrix with positive diagonal entries.
The theory remains completely unchanged. Pinning reactions $1,\dots ,p$ is equivalent to zeroing the first $p$ rows of $E$ and the first $p$ columns of $F$.

\subsection{Periodic forcing input}
Assume that $g_0(t)$ is a periodic input \cite{CourSepul2009,FioGuaDiB19,russo_dibernardo_sontag09} and $x_p(t)$
is a periodic target trajectory, having the same period as $g_0(t)$,
corresponding to  $g_0(t)$. Let $x(t)$ be any other trajectory.
Then, we can write
$$
\dot x(t) = Sg(x(t)) + g_0(t) \,\, \mbox{and} \,\, \dot x_p(t) = Sg(x_p(t)) + g_0(t).
$$
Denoting $z(t)=x(t)-x_p(t)$, the $BDC$-decomposition leads to the dynamical system
$$
\dot z(t) = B D(z(t),t) C z(t).
$$
Since our analysis considers a differential inclusion with \emph{arbitrary} time-varying $D(\cdot)$, the stability -- or stabilisation via pinning control -- of such a differential inclusion implies $z(t) \rightarrow 0$, hence the stability -- or stabilisation -- of the periodic trajectory.

\section{Pinning Control of Reaction Networks}\label{sec:applications}

Pinning control, as mentioned, consists in applying strong local feedback actions to some
nodes (or arcs) with the aim of regulating the whole network. How can one select the node(s) to be pinned in order to achieve the control objective?

Based on our results, we can re-formulate the question as: Which are the nodes that, if converted into black holes, ensure global boundedness of the GPN evolution, hence leading to Lyapunov stability of the overall system once they are subject to a sufficiently strong local feedback?  

We provide here some examples of chemical reaction networks: the readers are invited to have a preliminary look at the network graphs (shown in Figures \ref{fig:R} and \ref{fig:M}) and see if they can spot immediately which nodes are the \emph{most important} ones to be governed so as to rule all the others; the authors of this paper often failed to guess these nodes in advance.

\begin{figure}[tb]
\centering
\includegraphics[width=7cm]{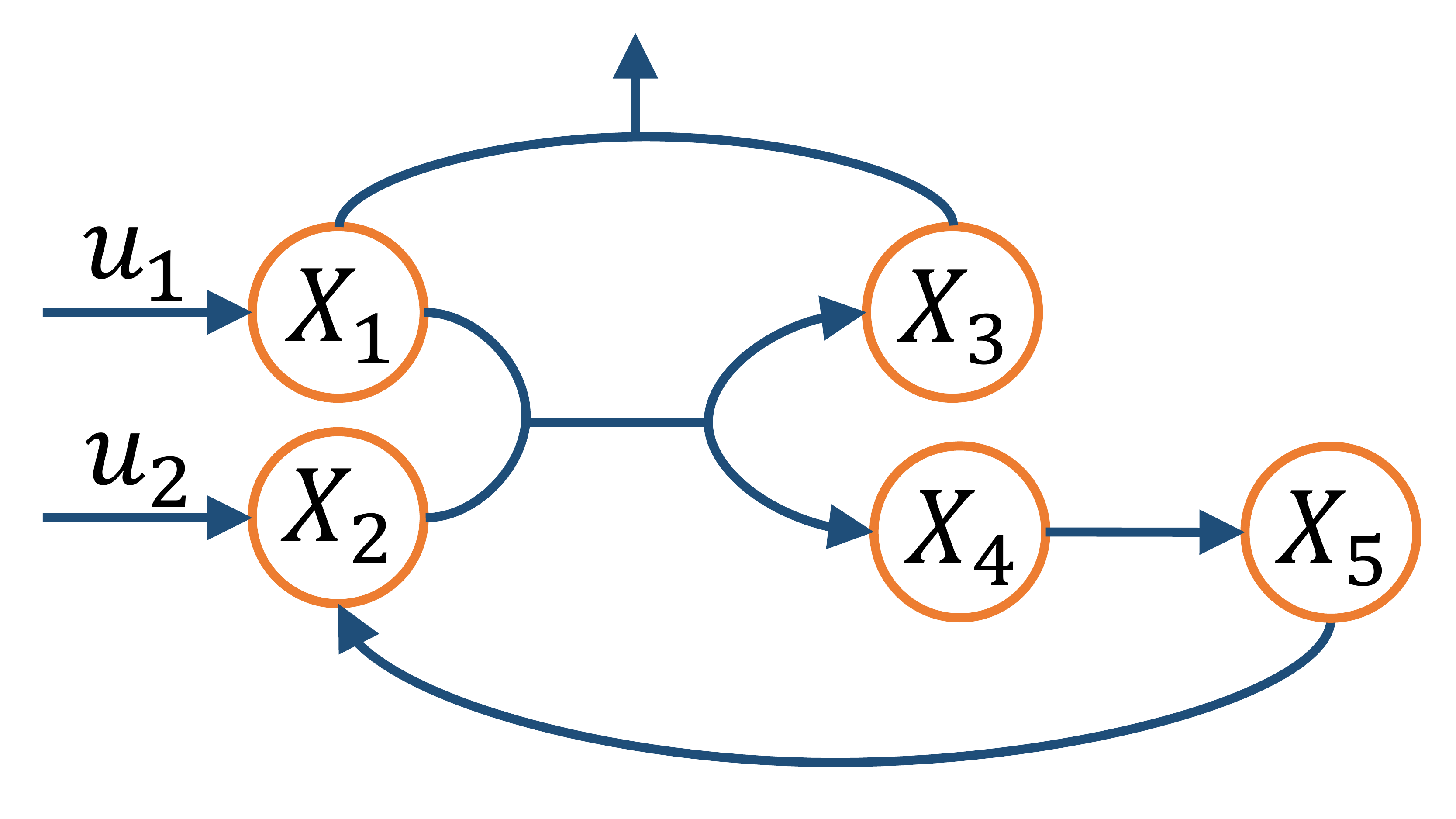}
\caption{Graph of the chemical reaction network corresponding to the system in Example \ref{ex:Ravel}.}
\label{fig:R}
\end{figure}

\begin{example}\label{ex:Ravel}
For the network in Fig. \ref{fig:R}, Procedure \ref{proc} does not converge, hence the system does not admit a structural polyhedral Lyapunov function. However, if we pin any of the nodes $X_2$, $X_4$ or $X_5$, the procedure converges, hence the system is structurally stabilised by enforcing a sufficiently strong local feedback on any of these nodes.
Conversely, the procedure does not converge even if we pin node $X_1$ or node $X_3$. This fact is explained by noticing that pinning $X_2$, $X_4$ or $X_5$ cuts the loop $X_2 \rightarrow X_4 \rightarrow  X_5 \rightarrow X_2$, and tokens repeatedly circulating in this loop continue depleting $X_1$ or filling up $X_3$.
\end{example}

\begin{figure}[tb]
\centering
\includegraphics[width=7cm]{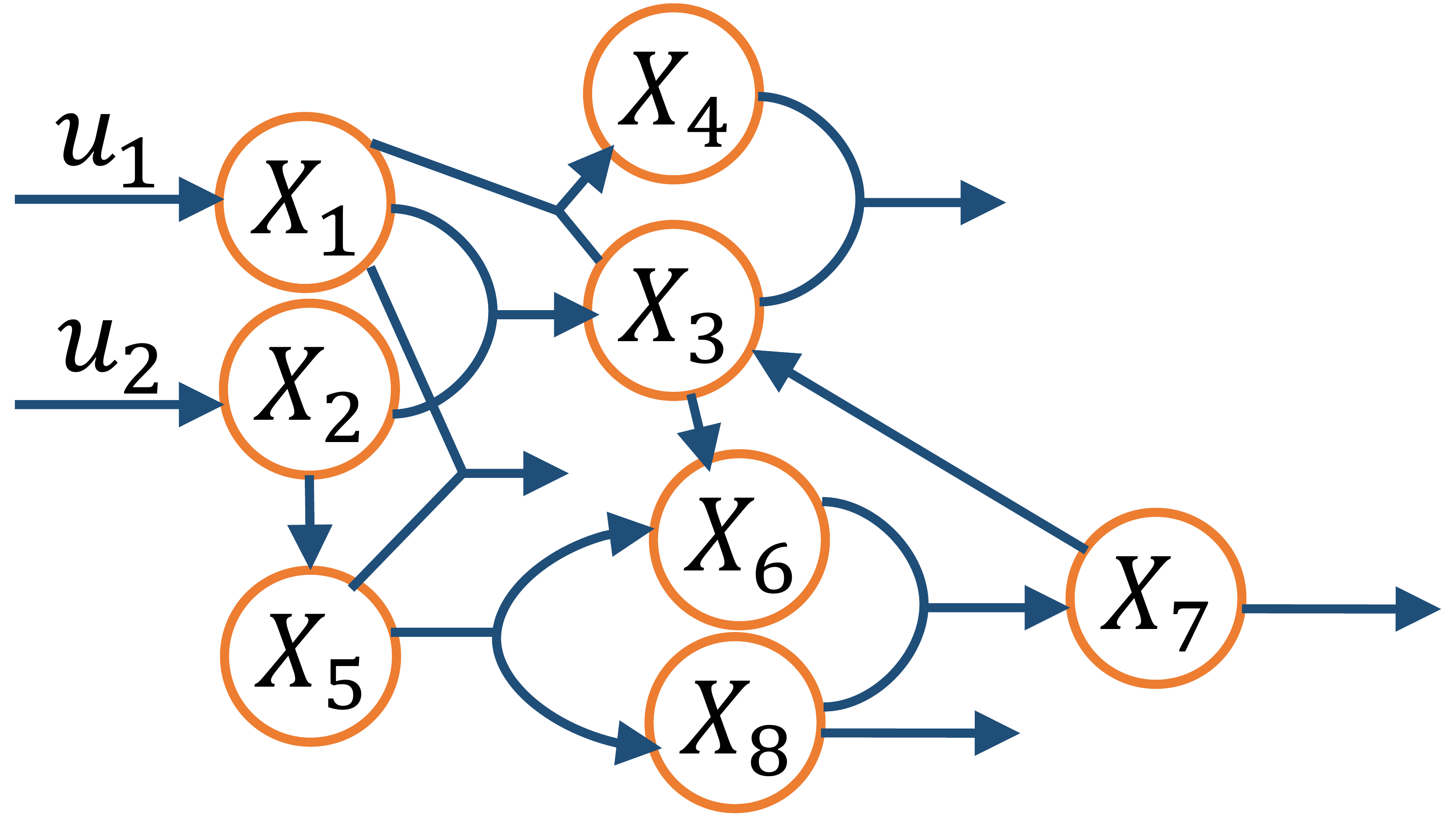}
\caption{Graph of the chemical reaction network corresponding to the system in Example \ref{ex:Massenet}.}
\label{fig:M}
\end{figure}

\begin{example}\label{ex:Massenet}
The network in Fig. \ref{fig:M} does not admit a structural PLF: Procedure \ref{proc} does not converge.
If we pin node $X_3$, then the procedure converges.
This node is then the most crucial one in the network: pinning any one of the other nodes does not yield convergence.
\end{example}

\begin{example}\label{translation}
\textbf{(A translation model.)}
The complete translation model proposed in \cite{DelVecchioMurray2014} includes the chemical reactions:
\begin{equation}\label{react:translation}
\begin{array}{c}
X_1+X_2 \revreact{g_{12}}{g_3} X_3 \react{g^*_3} X_4+X_2,\\
X_4 \react{g_4} X_5 \react{g_5} X_6\react{g_6} X_7\react{g_7} X_8 + X_9,\\
X_8 \react{g_8}X_1, \qquad X_9 \react{g_9}   X_{10} \react{g_{10}} \emptyset.
\end{array}
\end{equation}

The stability of the reduced order model \cite{DelVecchioMurray2014}
$$
X_1+X_2 \revreact{g_{12}}{g_{3}}  X_3 \react{g^*_3} X_1+ X_2+X_4,~~~X_4\react{g_4} \emptyset
$$
is considered in \cite{AlrAngSon19}, where the system is shown to admit a piecewise-linear Lyapunov function in rates.

We consider here the complete model associated with the reaction network \eqref{react:translation}, corresponding to the system of equations
$$
\begin{array} {rl}&
\left \{
\begin{array}{rcl}
\dot x_1 &=& -g_{12}(x_1,x_2) + g_{3}(x_3) +g_{8}(x_8)  \\
\dot x_2 &=& -g_{12}(x_1,x_2) + g_{3}(x_3) +g_{3}^*(x_3)\\
\dot x_3 &=&  g_{12}(x_1,x_2) - g_{3}(x_3) -g_{3}^*(x_3)\\
\dot x_4 &=&  g_{3}^*(x_3) -g_{4}(x_4) \\
\dot x_5 &=&   g_{4}(x_4)-g_{5}(x_5) \\
\dot x_6 &=&   g_{5}(x_5)-g_{6}(x_6) \\
\dot x_7 &=&   g_{6}(x_6)-g_{7}(x_7) \\
\dot x_8 &=&   g_{7}(x_7)-g_{8}(x_8)
\end{array}
\right .
\\&\hspace{2mm}
\begin{array}{rcl}
\dot x_9  &=& g_{7}(x_7)-g_{9}(x_9)\\
\dot x_{10} &=& g_{9}(x_9)- g_{10}(x_{10})
\end{array}
\end{array}
$$
We do not consider the last two equations: if the subsystem associated with
the variables $x_1$-$x_8$ converges to an equilibrium, and in particular $x_7 \rightarrow \bar x_7$, then also $x_9$ and, in turn, $x_{10}$ converge to an equilibrium.

It is apparent that  
  $\dot x_1 +\dot x_3 + \dot x_4 +\dot x_5 +\dot x_6 +\dot x_7 + \dot x_8=0 $ 
and $\dot x_2+\dot x_3=0,$ hence the sums of concentrations
$x_1+x_3+ x_4+x_5+x_6+x_7+x_8 \doteq w_1$
and
$x_2+x_3 \doteq w_2$
remain constant, thus forming a stoichiometric compatibility class, which is bounded because $x \geq 0$. Hence, an equilibrium indeed exists \cite{BG2021survey}.
We can then replace variables $x_1$ and $x_2$ by $w_1$ and $w_2$,
and keep all the others variables. This is equivalent to considering the state transformation $T^{-1}x=w$ and the transformed system
$$
\dot w = T^{-1}BDCT w,
$$
with
$$
 T^{-1}= \left[
\begin{smallmatrix}
     1   &  0  &   1   &  1    & 1    & 1   &  1   &  1\\
     0   &  1    & 1    & 0   &  0   &  0   &  0    & 0\\
     0    & 0   &  1  &   0  &   0 &    0  &   0   &  0\\
     0   &  0  &   0   &  1   &  0   &  0    & 0    & 0\\
     0   &  0  &   0   &  0   &  1   &  0  &   0   &  0\\
     0   &  0  &   0   &  0  &   0  &   1   &  0   &  0\\
     0   &  0  &   0   &  0   &  0   &  0   &  1  &   0\\
     0   &  0  &   0   &  0   &  0   &  0   &  0   &  1
 \end{smallmatrix}
\right]
$$
and then neglecting the first two rows of $B$, which are of course zero
because $w_1$ and $w_2$ are both constant, as well as the first two columns of $C$. The resulting reduced $6$-dimensional system admits the $BDC$-decomposition with
$$B=
\left[
\begin{smallmatrix}
     1   &  1   & -1   &  0   &  0  &   0    & 0  &   0\\
     0  &   0  &   1  &  -1  &   0  &   0  &   0  &   0\\
     0  &   0   &  0   &  1   & -1  &   0  &   0   &  0\\
     0  &   0   &  0  &   0 &    1 &   -1  &   0  &   0\\
     0  &   0   &  0  &   0  &   0 &    1  &  -1  &   0\\
     0  &   0  &   0  &   0  &   0 &    0  &   1  &  -1
\end{smallmatrix}
\right]$$
and
$$C=\left[
\begin{smallmatrix}
    -1  &  -1   & -1   & -1  &  -1  &  -1\\
    -1   &  0  &   0    & 0   &  0  &   0\\
     1   &  0   &  0   &  0  &   0  &   0\\
     0  &   1   &  0   &  0  &   0  &   0\\
     0  &   0  &   1   &  0  &   0   &  0\\
     0  &   0   &  0   &  1  &   0  &   0\\
     0  &   0   &  0   &  0  &   1  &   0\\
     0  &   0   &  0   &  0  &   0  &   1
 \end{smallmatrix}
\right],
$$
where $C_i^\top B_i=-1$, for all $i$, as expected (cf. Remark \ref{stoi}).
Procedure \ref{proc} converges, yielding a structural PLF with 42 vertices, and matrix $BDC$ passes the structural non-singularity test: this proves asymptotic stability of the complete system.
\end{example}
 
\begin{example}
\textbf{(A transcription model.)}
The complete transcription model proposed in \cite{DelVecchioMurray2014} is
\begin{equation}\label{react:transcription}
\begin{array}{c}
X_1 \revreact{g_{1}}{g_2} X_2  \revreact{g^*_{2}}{g_3} X_3,\\
X_3+X_4\revreact{g_{34}}{g_5} X_5 \react{g_5^*}X_6 \react{g_6}X_4+X_7,\\
X_7 \react{g_7} X_8 \react{g_8} X_9 \react{g_9} X_{10} \react{g_{10}} X_{11}+X_{12}, \\
X_{11}\react{g_{11}}X_{1}, \quad X_{12}\react{g_{12}} \emptyset.
\end{array}
\end{equation}
Again, \cite{DelVecchioMurray2014} also proposed a reduced-order model
$$X_1+X_2 \revreact{g_{12}}{g_{3}}  x_3 \react{g^*_3} X_2+ X_4 +X_5,~~~
x_3\react{g_3} X_1,~~~X_4 \react{g_4} \emptyset,$$
for which  a piecewise-linear Lyapunov function in rates is known
to exists \cite{AlrAngSon19}.

We consider the system of differential equations associated with the complete reaction network \eqref{react:transcription}, which is
$$
\left \{
\begin{array}{rcl}
\dot x_1 &=& -g_{1}(x_1) + g_{2}(x_2) +g_{11}(x_{11})\\
\dot x_2 &=&  g_{1}(x_1) -g_{2}(x_2) - g^*_2(x_2) + g_{3}(x_3)  \\
\dot x_3 &=&  g_{2}^*(x_2) -g_{3}(x_3) - g_{34}(x_3,x_4) +g(x_5) \\
\dot x_4 &=&  - g_{34}(x_3,x_4) + g(x_5) + g_{6}(x_6)\\
\dot x_5 &=&   g_{34}(x_3,x_4) - g(x_5) - g_5^*(x_5) \\
\dot x_6 &=&   g_5^*(x_5)  - g_{6}(x_6) \\
\dot x_7 &=&   g_{6}(x_6)-g_{7}(x_7) \\
\dot x_8 &=&   g_{7}(x_7)-g_{8}(x_8) \\
\dot x_9 &=&   g_{8}(x_8)-g_{9}(x_9) \\
\dot x_{10} &=& g_{9}(x_9) - g_{10}(x_{10})\\
\dot x_{11} &=& g_{10}(x_{10}) - g_{11}(x_{11})
\end{array}
\right .
$$
We can neglect the additional equation $\dot x_{12} = g_{10}(x_{10}) - g_{12}(x_{12})$, because, once we prove that the $11$-order system is stable and converges to an equilibrium, convergence of $x_{12}$ to an equilibrium immediately follows. For this system, matrices $B$ and $C$ are
$$B =
\left[
\begin{smallmatrix}
    -1   &  1   &  0   &  0   &  0   &  0   &  0   &  0   &  0   &  0   &  0   &  0   &  0   &  1\\
     1   & -1   & -1   &  1   &  0   &  0   &  0   &  0   &  0   &  0   &  0   &  0   &  0   &  0\\
     0   &  0   &  1   & -1   & -1   & -1   &  1   &  0   &  0   &  0   &  0   &  0   &  0   &  0\\
     0   &  0   &  0   &  0   & -1   & -1   &  1   &  0   &  1   &  0   &  0   &  0   &  0   &  0\\
     0   &  0   &  0   &  0   &  1   &  1   & -1   & -1   &  0   &  0   &  0   &  0   &  0   &  0\\
     0   &  0   &  0   &  0   &  0   &  0   &  0   &  1   & -1   &  0   &  0   &  0   &  0   &  0\\
     0   &  0   &  0   &  0   &  0   &  0   &  0   &  0   &  1   & -1   &  0   &  0   &  0   &  0\\
     0   &  0   &  0   &  0   &  0   &  0   &  0   &  0   &  0   &  1   & -1   &  0   &  0   &  0\\
     0   &  0   &  0   &  0   &  0   &  0   &  0   &  0   &  0   &  0   &  1   & -1   &  0   &  0\\
     0   &  0   &  0   &  0   &  0   &  0   &  0   &  0   &  0   &  0   &  0   &  1   & -1   &  0\\
     0   &  0   &  0   &  0   &  0   &  0   &  0   &  0   &  0   &  0   &  0   &  0   &  1   &  -1
\end{smallmatrix}
\right]$$
and
$$
C =
\left[
\begin{smallmatrix}
     1   &  0   &  0   &  0   &  0   &  0   &  0   &  0   &  0   &  0   &  0\\
     0   &  1   &  0   &  0   &  0   &  0   &  0   &  0   &  0   &  0   &  0\\
     0   &  1   &  0   &  0   &  0   &  0   &  0   &  0   &  0   &  0   &  0\\
     0   &  0   &  1   &  0   &  0   &  0   &  0   &  0   &  0   &  0   &  0\\
     0   &  0   &  1   &  0   &  0   &  0   &  0   &  0   &  0   &  0   &  0\\
     0   &  0   &  0   &  1   &  0   &  0   &  0   &  0   &  0   &  0   &  0\\
     0   &  0   &  0   &  0   &  1   &  0   &  0   &  0   &  0   &  0   &  0\\
     0   &  0   &  0   &  0   &  1   &  0   &  0   &  0   &  0   &  0   &  0\\
     0   &  0   &  0   &  0   &  0   &  1   &  0   &  0   &  0   &  0   &  0\\
     0   &  0   &  0   &  0   &  0   &  0   &  1   &  0   &  0   &  0   &  0\\
     0   &  0   &  0   &  0   &  0   &  0   &  0   &  1   &  0   &  0   &  0\\
     0   &  0   &  0   &  0   &  0   &  0   &  0   &  0   &  1   &  0   &  0\\
     0   &  0   &  0   &  0   &  0   &  0   &  0   &  0   &  0   &  1   &  0\\
     0   &  0   &  0   &  0   &  0   &  0   &  0   &  0   &  0   &  0   &  1
 \end{smallmatrix}
\right].
$$
Note that, given that the corresponding derivatives add up to zero, the following sums of concentrations are constant: $x_1 + x_2 + x_3 + x_5 + x_6 + x_7 + x_8 + x_9 + x_{10} + x_{11} \doteq w_1$ and $x_4 + x_5+ x_6 \doteq w_2$ for all $t$.
Along with $x \geq 0$, this ensures boundedness of the stoichiometric compatibility
class, hence the existence of an equilibrium \cite{BG2021survey}.
Again, we replace $x_1$ and  $x_2$ by  $w_1$ and  $w_2$, for which $\dot w_1=0$ and  $\dot w_2=0$, and the corresponding equations are removed.

For this system, Procedure \ref{proc} does not converge, hence we cannot prove structural stability.
However, according to \cite{DelVecchioMurray2014}, a negative regulatory action can be present, due to a repressor signal acting on variable $x_4$ (the DNA promoter).
To investigate the case in which $x_4$ is under a feedback action,
we can pin node $x_4$: then, Procedure \ref{proc} converges, providing a PLF whose unit ball has $54$ vertices (the function in the reduced space has $52$ vertices).
\end{example}

%

\section{Concluding Discussion}

The main contribution of this paper is twofold. First, we have shown that,
for dynamical systems associated with unitary chemical reaction networks, the existence of a polyhedral Lyapunov function is equivalent to the finiteness of the reachable set of an associated generalised Petri net, with possibly negative token numbers. Second, we have shown that applying a pinning control to some nodes is structurally equivalent to converting the corresponding nodes of the generalised Petri net into \emph{black holes} that swallow any incoming token.

Pinning a node means applying a strong local feedback that keeps the node variable constant. For a biochemical reaction network, enforcing actions that keep the concentration of a species constant seems indeed a viable control approach, which we conjecture is actually used in natural systems to stabilise important cellular processes. In many cases, when a species is far more abundant than all the other chemical species involved in the reaction network, its concentration can be regarded as constant, because it is essentially unchanged by the process, while other concentrations are subject to ample fluctuations: this can already be seen as an ``embedded'' pinning control action.

Future research directions along these lines include considering more general types of structural feedback laws. Another aspect we leave for the future is how to fit this framework in a stochastic setting in which the transitions are probabilistic, going beyond the worst case structural analysis provided here.



{}

\appendix

\section*{Proof of Lemma \ref{sublemma}}
The set ${\cal S}$ is positively invariant for \eqref{systemmu} for any large $\mu>0$.
Consider the modified set
$$
{\cal S}_\nu=\{z \in {\cal S} \colon \|z_1\| \leq \nu \},
$$ 
which corresponds to the portion between the cyan planes in Fig. \ref{fig:lemma}.
For any $\nu>0$ (no matter how small) there exist $\hat \mu$ such that
${\cal S}_\nu$ becomes
positively invariant for $\mu \geq \hat \mu$. 
\begin{figure}[tb]
\centering
\includegraphics[width=5cm]{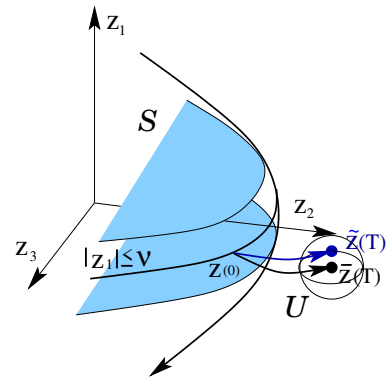}
\caption{Graphical representation of Lemma \ref{sublemma}.}
\label{fig:lemma}
\end{figure}
Indeed,
$$
\dot z_1 = F_{11} z_1- \mu  z_1 + F_{12} z_2,
$$
where both $ F_{12} z_2 $ and $  F_{11} z_1  $ are bounded terms,
because ${\cal S}_\nu$ is a compact set. Therefore, we can write
$| z_1^\top F_{12} z_2| \leq \alpha$ and $|  z_1^\top F_{11} z_1| \leq \beta $
for suitable positive coefficients $\alpha$ and $ \beta $.
Consider the candidate Lyapunov function
$W(z_1) =\frac{1}{2} \|z_1\|^2$ and the ball $\|z_1\|^2 \leq \nu^2$.
Then 
$$
\dot W(z_1) = z_1^\top F_{11}z_1 - \mu z_1^\top z_1 + z_1^\top F_{12} z_1
\leq - \mu z_1^\top z_1 + \alpha + \beta,
$$
hence $\dot W(z_1)<0$ for $\|z_1\|^2 > \nu^2$ provided that
$\mu \geq \hat \mu= (\alpha + \beta)/\nu^2$.

Therefore, assume that $\|z_1\| \leq \nu$, $\mu \geq \hat \mu$.
Take the initial condition $z_1(0)=0$ and $\bar z_2=z_2(0)$ on the boundary
of ${\cal S}_2$.

By contradiction, assume that the solution  $\bar z_2(t)$
of $\dot z_2 = F_{22} z_2$ leaves the set ${\cal S}_2$. Consider this
solution in the extended $z$ space, $\hat z(t)=[0~\bar z_2(t)^\top]^\top$.  
There exists a time instant $T$ such that
$\hat z(T)=[0~\bar z_2(T)^\top]^\top$ is outside the compact ${\cal S}$ and there exists
a neighbourhood ${\cal U}$ (the ball in Fig. \ref{fig:lemma}) centred at $\bar z(T)$, which has no intersection with
 ${\cal S}$. Note that $\hat z(T)$ does not depend on $\mu$.

Now, consider the solution $\tilde z(t) $ of the full system (which depends on $\mu$)
with the same initial condition $\bar z=[0~\bar z_2(T)^\top]^\top$.

We complete the proof by showing that $\tilde z(T)$ gets arbitrarily close to
$\hat z(T) $ if $\mu$ is large enough. The first component satisfies 
$\|\tilde z_1\| \leq \nu$. The second component satisfies
$
\dot{\tilde{z}}_2 = F_{22} \tilde z_2 + F_{21} \tilde z_1.
$
Then, the difference $\tilde z_2(t) - \bar z_2(t)$ satisfies
$$
\frac{d}{dt} [\tilde z_2 - \bar z_2] = F_{22} [\tilde z_2- \bar z_2] + F_{21} \tilde z_1,~~
\| \tilde z_1(t)\| \leq \nu,~~\forall t
$$
with $\tilde z_2(0) - \bar z_2(0)=0$.  Hence
\begin{eqnarray*}
\|\tilde z_2(T) - \bar z_2(T)\| &=& \left \|\int_0^T~e^{F_{22}(T-t)} F_{21} \tilde z_1(t) dt \right \| 
\\
&\leq& \|\nu\| \left \|\int_0^T~e^{F_{22}(T-t)} F_{21}dt \right \|.
\end{eqnarray*}
Given any small $\rho>0$, we may ensure $ \|\tilde z_2(T) - \bar z_2(T)\| \leq \rho$ 
by forcing   a small enough $\nu$. Since both $\nu$ and $\rho$ can be arbitrarily small, we get that $\hat z(T) \in {\cal U}$, hence it is outside ${\cal S}$, against the invariance assumption. We have reached a contradiction, which completes the proof.

\begin{IEEEbiography}[{\includegraphics[width=1in,height=1.25in,clip,keepaspectratio]{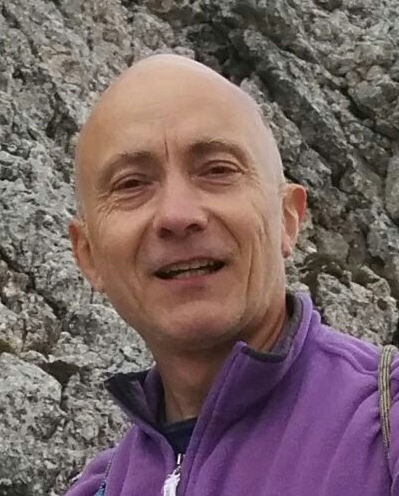}}]{Franco Blanchini} (Senior Member, IEEE) was born on 29 December 1959, in Legnano (Italy). He is the Director of the Laboratory of System Dynamics at the University of Udine. He has been involved in the organization of several international events: in particular, he was Program Vice-Chairman of the conference Joint CDC-ECC 2005, Seville, Spain; Program Vice-Chairman of the Conference CDC 2008, Cancun, Mexico; Program Chairman of the Conference ROCOND, Aalborg, Denmark, June 2012 and Program Vice-Chairman of the Conference CDC 2013, Florence, Italy. He is co-author of the book ``Set theoretic methods in control'', Birkhäuser. He received the 2001 ASME Oil \& Gas Application Committee Best Paper Award as a co-author of the article ``Experimental evaluation of a High-Gain Control for Compressor Surge Instability'', the 2002 IFAC prize Survey Paper Award as the author of the article ``Set Invariance in Control - a survey'', Automatica, November 1999, for which he also received the High Impact Paper Award in 2017, and the 2017 NAHS Best Paper Award as a co-author of the article ``A switched system approach to dynamic race modelling'', Nonlinear Analysis: Hybrid Systems, 2016.
He was nominated Senior Member of the IEEE in 2003. He has been an Associate Editor for Automatica, from 1996 to 2006, and for IEEE Transactions on Automatic Control, from 2012 to 2016. From  2017 to 2019 he has been an Associate Editor for Automatica. He has been a Senior Editor for IEEE Control Systems Letters.
\end{IEEEbiography}

\begin{IEEEbiography}[{\includegraphics[width=1in,height=1.25in,clip,keepaspectratio]{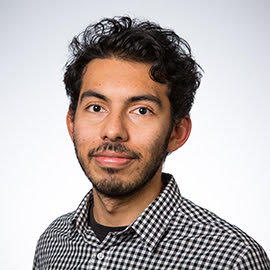}}]{Carlos Andrés Devia} (Student Member, IEEE) received his B.Sc. and M.Sc. in electronic engineering in 2015 and 2018 respectively, from Pontificia Universidad Javeriana, Bogotá, Colombia. From 2017 to 2019 he was an instructor at the same institution. Since 2019 he is a PhD candidate at the Delft Center for Systems and Control, Delft University of Technology, The Netherlands. His research interests include networked systems with emphasis on biological systems and opinion formation models.
\end{IEEEbiography}

\begin{IEEEbiography}[{\includegraphics[width=1in,height=1.25in,clip,keepaspectratio]{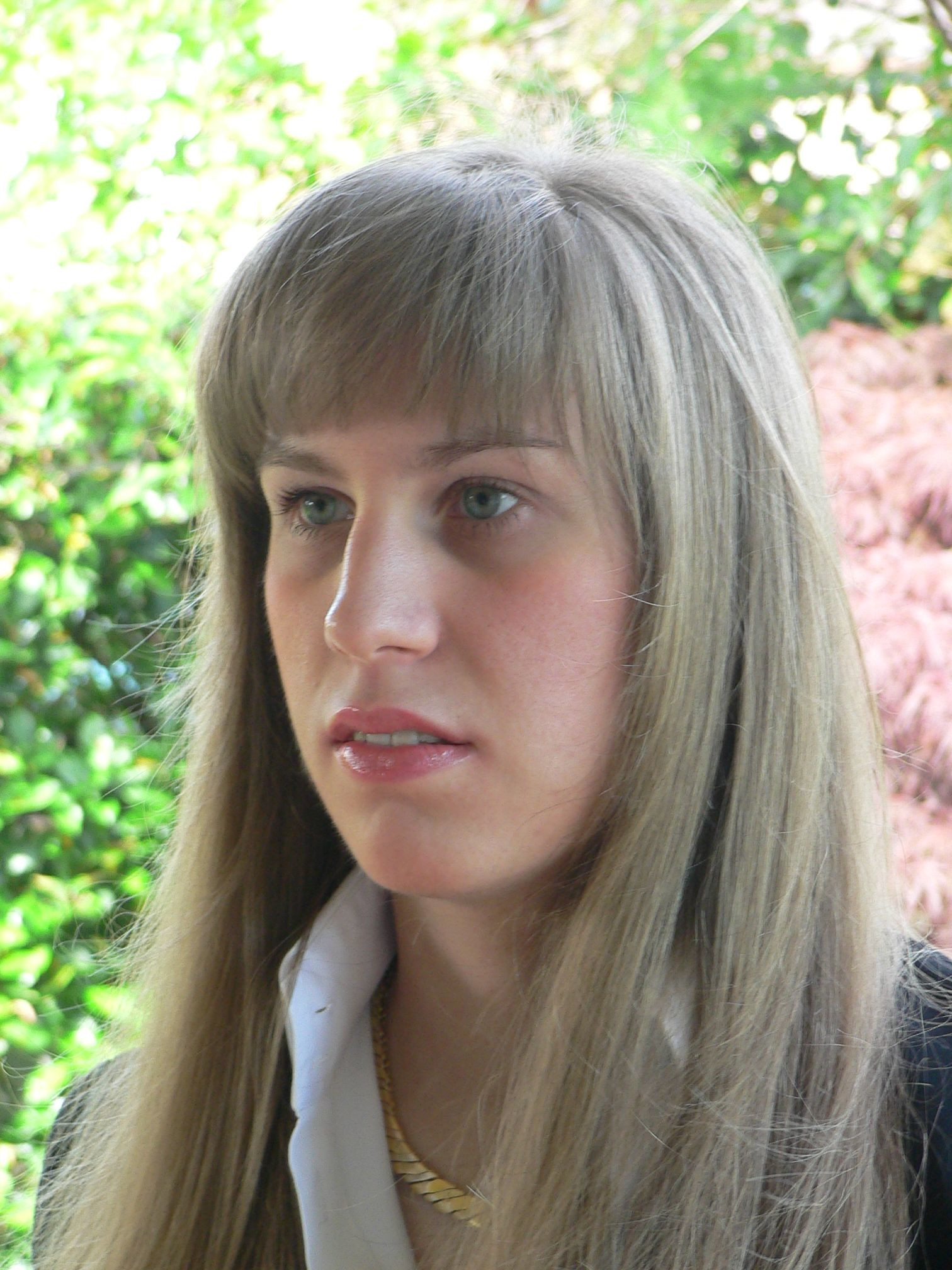}}]{Giulia Giordano} (Member, IEEE) received the B.Sc. and M.Sc. degrees summa cum laude in electrical engineering and the Ph.D. degree (Hons.) in systems and control theory from the University of Udine, Italy, in 2010, 2012, and 2016, respectively. She visited the California Institute of Technology, Pasadena (CA), USA, in 2012, and the University of Stuttgart, Germany, in 2015. She was a Research Fellow at Lund University, Sweden, from 2016 to 2017, and an Assistant Professor at the Delft University of Technology, The Netherlands, from 2017 to 2019. She is currently an Assistant Professor at the University of Trento, Italy. She was recognised with the Outstanding Reviewer Letter from the IEEE Transactions on Automatic Control in 2016 and from the Annals of Internal Medicine in 2020. She received the EECI Ph.D. Award 2016 for her thesis ``Structural Analysis and Control of Dynamical Networks'', the NAHS Best Paper Prize 2017, as a coauthor of the article ``A Switched System Approach to Dynamic Race Modelling'', Nonlinear Analysis: Hybrid Systems, 2016, and the SIAM Activity Group on Control and Systems Theory Prize 2021. Her main research interests include the study of dynamical networks, the analysis of biological systems, and the control of networked systems.
\end{IEEEbiography}

\end{document}